\newcommand{\D}{\mathbf{D}}
\newcommand{\DI}{\mathbf{D}_{-i}}
\newcommand{\TAPStar}{TAP$^+$}
\begin{document}
\title{Social Cost Guarantees in Smart Route Guidance
}
\author{Paolo Serafino\inst{1} \and
Carmine Ventre\inst{2} \and
Long Tran-Thanh\inst{3} \and
Jie	Zhang\inst{3} \and
Bo An\inst{4} \and
Nick Jennings\inst{5}
}
%
%
\institute{Gran Sasso Science Institute, Italy
\email{paolo.serafino@gssi.it}\\
\and
University of Essex, UK, \email{c.ventre@essex.ac.uk}
\and
University of Southampton, UK, \email{\{ltt08r,jie.zhang\}@soton.ac.uk	}
\and 
Nanyang Technological University, Singapore,
\email{boan@ntu.edu.sg}
\and
Imperial College London, UK, \email{n.jennings@imperial.ac.uk}}

%
%
%
\maketitle              
\begin{abstract}
We model and study the problem of assigning traffic in an urban road network infrastructure. 
In our model, each driver submits their intended destination
and is assigned a route to follow 
that minimizes the social cost (i.e., travel distance of all the drivers).
We assume drivers are strategic and try to manipulate the system (i.e., misreport their intended destination and/or deviate from the assigned route) if they can reduce their travel distance by doing so.
Such strategic behavior is highly undesirable as it can lead to an overall suboptimal traffic assignment and cause congestion. 
To alleviate this problem, we develop moneyless mechanisms that are resilient to
manipulation by the agents and offer provable approximation guarantees on the social cost obtained by the solution.
%
We then empirically test the mechanisms studied in the paper, showing that they can be effectively used in practice in order to compute manipulation resistant traffic allocations.

\end{abstract}

\section{Introduction}


Recent years have witnessed increasing interest in the development of efficient traffic control systems \cite{osorio2015urban,leontiadis2011effectiveness,djahel2013adaptive}. This is motivated by the significant negative impact on the quality of life of both road users and residents caused by heavy traffic congestion levels in large cities such as London, Beijing, and Los Angeles. 
Indeed, heavy congestion is known to be a major cause of air and noise pollution, which are widely recognized as the main cause of many health issues~\cite{Kryzanowski2005,Stanfeld2003}. 
Adding to this is the economic cost associated with the large amount of time spent in traffic jams, which reduces the productivity of the economy~\cite{goodwin2004economic}.
Moreover, the situation is expected to become significantly worse in the future when the population, and thus the traffic flow, in large cities will be much bigger than at present.
Unfortunately, conventional traffic control systems have proven unable to efficiently decrease congestion levels, as they are not designed to be adaptive to the dynamics of city traffic, which changes over space and time. 
On the other hand, it has been shown~\cite{raphael2015goods,levin2017optimizing} that by putting some sort of intelligence/smartness into traffic control systems, we can make them adapt to the changes of the traffic flow. 
A key objective within these smart traffic control systems is to address the so-called \emph{traffic assignment problem} (TAP), in which mobile agents (i.e., typically drivers) declare their intended destination to the system, perhaps via their satellite navigation systems, and are then assigned a route to follow, in such a way that some objective function of the overall traffic flow in the system is optimized (i.e., minimizing the total traveled distance or maintaining an efficient traffic load balance).
As these agents are typically self-interested and strategic (i.e., they try to maximize their own utility, disregarding whether this is detrimental to the global optimization goal), they may manipulate the system whenever they can benefit from doing so~\cite{levin2017optimizing,vasirani2012market}.
%
This kind of opportunistic behavior is highly undesirable as it will increase the total social cost (i.e., decreasing the total load balance or increasing the total congestion level).
As such, incentivizing agents not to be strategic is a key design objective of these traffic assignment systems~\cite{raphael2015goods,levin2017optimizing,vasirani2012market}.
Given this, we focus on \emph{strategyproof} 
TAP mechanisms, which guarantee that it is in the agent's best interest to always report her true destination and follow the assigned route.
Furthermore, we assume that money transfers between the mechanism and the agents are not available.
This is a common assumption in many domains \cite{ProcacciaT13} that will facilitate the likely real-world deployment of the system by lowering set up costs (i.e., avoiding the construction of tolling booths).
The remainder of the paper is organized as follows.
In Section 2 we discuss related works. In Section 3 we introduce our model for TAP and prove that Pareto optimal allocations theoretically guarantee that  agents will follow their assigned paths (Theorem \ref{thm:pareto_opt}). 
%
We then move to study deterministic (Section \ref{sec:deterministic_mechs}) and randomized (Section \ref{sec:randomized_mechs}) Pareto optimal mechanisms for our problem.
We show that the approximation ratio of \textit{deterministic} strategyproof mechanisms is lower bounded by 3 (Theorem \ref{thm:deterministic_LB}), while the Serial Dictatorship mechanism can achieve an upper bound of $2^n - 1$ and it is Pareto-optimal and non-bossy (Theorems \ref{thm:apx_SD} and \ref{thm:tightness_SD}), where $n$ is the number of agents (Theorems \ref{thm:apx_SD} and \ref{thm:tightness_SD}).
Furthermore, if we require non-bossiness and Pareto optimality, we are able to close this approximation ratio gap by showing that the Bipolar Serial Dictatorship mechanism is the \emph{only} strategyproof mechanism.
%
For \textit{randomized} mechanisms, we show that the approximation ratio is lower bounded by $\frac{11}{10}$ (Theorem \ref{thm:randomizedLB}). In addition, the Random Serial Dictatorship mechanism can achieve an  $n$-approximation (Theorems 8 and 9), while still preserving the desired properties of Pareto-optimality and non-bossyness.
In addition to these theoretical results, we present an extensive experimental evaluation on traffic networks generated from real road network data, which show how the mechanisms studied in the paper provide good performance in practice, despite the high theoretical worst case approximation guarantee.  

\noindent
Full proofs and definitions can be found in the Appendix.


\section{Related Work}\label{sec:related_work}
%

There is a large body of literature on traffic network modelling and assignment~\cite{beckmann1956studies,skabardonis1997improved,coogan2015compartmental,daganzo1994cell}.
However, these works typically ignore the strategic behaviour of participating agents. 
Nevertheless, they can be useful to model the underlying traffic network in our work.
In particular, we follow the widely used traffic model proposed in~\cite{beckmann1956studies}.

To tackle the strategic behaviour of the agents, several researchers have suggested employing mechanism design with money and auction theory for traffic control~\cite{raphael2015goods,levin2017optimizing,vasirani2012market,Brenner2010}.
These works typically rely on the computation of the VCG auction in order to assign vehicles to paths. 
However, they require monetary incentives, and typically focus on a local control level, such as intersection management (as VCG is typically computationally hard, and thus, not readily scalable~\cite{conitzer2006failures}). 

A number of researchers have focused on mechanism design without money~\cite{ProcacciaT13,resource_augmentation}.
However, none of these mechanisms can be easily applied to the traffic assignment problem, as they do not take into account the features of the underlying traffic network structure.
As we will show, TAP bears some resemblance to the problem of assigning indivisible objects \cite{DBLP:journals/jet/BogomolnaiaDE05,Svensson1999,Filos-RatsikasF014}, although these results are not directly applicable to our scenario. Indeed TAP has a much more complex structure (mainly due to the underlying traffic network topology) which traditional assignment mechanisms fail to address. 

\section{Model and Preliminary Definitions}\label{sec:model}
A \emph{traffic assignment problem} (TAP) consists of a set of agents $A=\{a_1, \ldots, a_n \}$ and a road network infrastructure, represented as a directed graph $G=(V,E)$, where: (\emph{i}) $V = \{v_1,\ldots, v_{|V|}\}$ is the set of nodes representing the junctions of the road network infrastructure; and (\emph{ii}) $E\subseteq V\times V$ is the set of directed edges representing one-way road segments.
Each edge $e\in E$ has a \emph{capacity} $c : E \rightarrow \mathbb{N}^+$, which determines the maximum number of agents that can travel through the edge at any given time, and a \emph{weight function} $w: E \rightarrow \mathbb{R}^+$ which represents the cost incurred by the agent traveling through the edge (i.e., travel distance). Furthermore, each edge is associated to a \emph{transit time} $\tau: E \rightarrow \mathbb{Z}^+$ which represents the \emph{free travel time of the edge} (i.e., the minimum travel time needed to travel through the road at maximum allowed speed).
This means that agent $a_i$ setting off at time $t$ from node $v_o$ and heading to node $v_d$ through the edge $(v_o,v_d)$ will reach node $v_d$ at time $t+ \tau(v_o,v_d)$, and will occupy edge $(v_o,v_d)$ only in the time interval $[t, t+\tau(v_o,v_d)]$. Unless stated otherwise, we assume that edges $(u,v)$ and $(v,u)$ are \emph{symmetrical}: for all $(u,v),(v,u)\in E$ $c(u,v)=c(v,u)$, $w(u,v)=w(v,u)$ and $\tau (u,v)= \tau (v,u)$. 

As in \cite{Nesterov2003}, we assume that if the flow of traffic through an edge does not exceed its capacity, then no congestion occurs and the traveling time equals the free travel time.
Initially, at time $t=0$, agents reside on a (publicly known) set $O\subseteq V$ of nodes\footnote{Restricting origins/destinations of journeys to road junctions is without loss of generality since fictitious nodes that serve the sole purpose of acting as starting/ending point of a journey can always be created by edge splitting operations.} of the graph, $O_i$ being the initial location of agent $a_i$.
Each agent $a_i\in A$ wants to reach an intended destination $D_i\in V$, which is the agent's private information and will be referred to in the remainder as her \emph{type}.

Agents submit (or \emph{bid}) a destination to an \emph{allocation mechanism}, which then assigns each agent a path in order to optimize a certain objective function. 
More formally, let $\mathcal{P}$ be the set of all possible simple paths between any two nodes in $G$.
Let $\D=\left(D_1,\ldots,D_n \right)\in V^n$ be a \emph{vector of declarations} (also referred to as \emph{bids}) by the agents and $\DI$ be the vector of declarations of all agents but $a_i$.
A \emph{mechanism} $M^{G, O} : V^n \rightarrow \mathcal{P}^n$ maps a vector of declarations to \emph{feasible paths} (i.e., not exceeding the capacity of the edges at any given time) on $G$, given the initial locations $O$ of the agents.
We write $M(\D)$ instead of $M^{G,O}(\D)$ when $G$ and $O$ can be deduced from the context.
The path associated to agent $a_i$ is denoted as $M_i(\D)$.

A traffic assignment $S=M(\D)$ induces a \emph{flow over time}\footnote{Sometimes also referred to as \emph{dynamic flow} in the literature. We prefer the term \emph{flow over} time as the adjective \emph{dynamic} has often been used in many algorithmic settings to refer to problems where the input data 
arrive online or change over time. We assume that all the agents are present at time $t=0$ and the network is cleared after the last agent reaches their destination.} $f_{S} : E\times \mathcal{T} \rightarrow \mathbb{N^+}$, where $\mathcal{T}$ is a suitable discretization of time w.r.t. the transit times of the edges of $G$ (for simplicity we will assume that $\mathcal{T} = \{0,1,\ldots,T\}$, where $T$ is a time horizon sufficient for the network to clear. Thus, $f_{S}(u,v;t) = |\{a_i \in A | (u,v)\in S_i \}|$ is the number of agents that are assigned a path that contains edge $(u,v)$ at time $t\in \mathcal{T}$. Feasibility constraints imply that $f_{S}(u,v;t)\leq c(u,v)$ for all $t\in\mathcal{T}$.

In the remainder, without loss of generality, we will study the problem on the \emph{time-expanded network} \cite{FordFulkerson1,FordFulkersonBook} of $G$ and consider the \emph{static} flow through it (i.e, the transit of an agent over and edge is instantaneous).
A time-expanded network is a properly constructed directed graph with cost and capacity functions on the edges just like $G$, but no transit time
(i.e. travel time is instantaneous through all the edges). For completeness, we give the definition of time expanded networks in the Appendix.
This is without loss of generality from the point of view of SP, Pareto-optimality, non-bossines and approximation guarantee since
it is well known (see
\cite{FordFulkerson1,FordFulkersonBook}) that a flow over time is equivalent to a static flow on the corresponding time-expanded
network.

Let $f^{-i}_S : E \rightarrow \mathbb{N}$ be the flow induced by traffic assignment $S$ generated by agents $A\setminus \{a_i \}$, formally for all $e \in E$, $f^{-i}_{S}(e) = |\{ a_j \in A: e \in S_j, j \neq i\}|$.
The \emph{residual graph} $G_{f}^{-i}$ is a graph such that: (\emph{i}) $G_{f}^{-i}$ has the same nodes and edges as $G$; (\emph{ii}) each edge $e\in E$ of $G_{f}^{-i}$ has capacity $c(e)-{f}^{-i}_S(e)$.
For any two nodes $u,v\in V$, let $\mathcal{P}_{u,v}$ denote the set of simple paths in $G$ connecting $u$ to $v$.
Furthermore, for all traffic assignments $S=M(\D)$ and all agents $a_i$, let $\mathcal{P}_{u,v}^i (S) = \{P\in \mathcal{P}_{u,v} | \forall e \in P,  c(e)>{f}^{-i}_{S}(e)\}$.
Informally, $\mathcal{P}_{u,v}^{i}(S)$ is the set of paths connecting $u$ and $v$ that have spare capacity from the perspective of agent $a_i$ (i.e., they can be used by agent $a_i$) when the other agents implement $S$.
Then, the set of reactions available to agent $a_i$ having type $D_i$ at allocation $S$ is defined as $R_i (S) = \mathcal{P}^i_{O_i,D_i} (S)$.

Agents are not constrained to follow their assigned path but can choose a different one, subject to capacity constraints\footnote{We do not prevent agents from using edges other than the ones belonging to their assigned paths, as doing so would result in a waste of public resources (i.e., road capacity). To avoid congestion, though, we assume that agents not following their assigned route can be disincentivized from using an edge that, according to the scheduled traffic, is filled to capacity. This can be easily implemented in a smart traffic control system through the use of traffic cameras that check cars' number plates.}.
To model this, as per \cite{Nissim_2012}, we assume that, after the mechanism computes a traffic allocation, the agents can \emph{react} by choosing an action from a set $R_i \subseteq \mathcal{P}$.
Hence, the actual \emph{cost function} of an agent depends on: (\emph{i}) her true type $D_i$; (\emph{ii}) the allocation $S$ chosen by the mechanism on input the bids reported by the agents; and (\emph{iii}) the reactions chosen by the agents.

We can now formally define the cost function of an agent. Given an allocation $S'=M(D'_i,\DI)$, the cost of an agent of type $D_i$ with respect to $S'$ is defined as:
$cost_i(S',D_i) = \min_{P\in R_i(S')}w(P)$
where $w(P)=\sum_{(u,v)\in P}w(u,v)$ denotes the cost of $P$.
We assume that agents are risk-neutral.
In what follows, we define a set of desiderata for our allocation mechanism, namely: (\emph{i}) strategyproofness, (\emph{ii}) Pareto optimality and (\emph{iii}) non-bossiness.

A deterministic mechanism $M$ is \emph{strategyproof} (SP for short) if, for all agents $a_i$, for all declarations $D_i$ and $D'_i$ and all declarations of the other agents $\DI$, agent $a_i$ cannot decrease her cost by misreporting her true type, namely:
\begin{equation}
 cost_i(M(\D),D_i)\leq cost_i(M(D'_i,\DI),D_i) \label{eq:sp}
\end{equation}
A randomized mechanism is \emph{strategyproof in expectation} if \eqref{eq:sp} holds in expectation (i.e., over the random choices of the mechanism).
A randomized mechanism is \emph{universally strategyproof} if agents cannot gain by
lying regardless of the random choices made by the mechanism, i.e., the output of the mechanism is a
distribution over strategyproof deterministic allocations.

The \emph{social cost} of an allocation $S$ is defined as $SC(S,\D) = \sum_{a_i\in A}cost_i(S,D_i)$.
A mechanism $OPT$ is \emph{optimal} for TAP if $OPT(\D) \in \arg\min_{S\in \mathcal{P}^n} SC(S,\D)$ for all $\D$.
A mechanism $M$ is an $\alpha$--approximation (w.r.t the optimal social cost) with $\alpha\in \mathbb{R}$, $\alpha \geq 1$, being referred to as the \emph{approximation ratio} of $M$, if, for all $\D$, $SC(M(\D),\D)\leq \alpha\cdot SC(OPT(\D),\D)$.

A traffic allocation $S\in \mathcal{P}^n$ is \emph{Pareto optimal} if there exists no other feasible traffic allocation $S'$ such that $cost_j(S',D_j)\leq cost_j(S,D_j)$ for all $a_j$, and $cost_k(S',D_k)<cost_k(S,D_k)$ for some $a_k$.
Pareto optimal allocations are of particular interest in our scenario, because, as proven in Theorem \ref{thm:pareto_opt}, they are a min-cost response in the available reactions $R_i(S)$ of an agent. This gives us a theoretical guarantee that agents will actually implement Pareto optimal solutions returned by the mechanism.

\begin{theorem}\label{thm:pareto_opt}
Let $S=M(\D)$ be a traffic assignment and let $R_i(S)$ be the set of reactions available to $a_i$ at $S$.
If $S$ is Pareto optimal, then $M_i(\D)\in \arg\min_{P\in R_i(S)}w(P)$.
\end{theorem}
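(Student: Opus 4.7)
I would argue by contradiction. Assume $S = M(\D)$ is Pareto optimal yet $M_i(\D) \notin \arg\min_{P \in R_i(S)} w(P)$; then there exists a path $P^\ast \in R_i(S)$ with $w(P^\ast) < w(M_i(\D))$. The plan is to exhibit a feasible allocation $S'$ that strictly Pareto-improves on $S$ and thereby contradict the hypothesis.

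The natural construction is a single-agent reroute: $S'_i = P^\ast$ and $S'_j = S_j$ for every $j \neq i$. First I would verify feasibility of $S'$. Since $P^\ast \in R_i(S)$, every edge $e \in P^\ast$ satisfies $c(e) > f^{-i}_S(e)$, so adding $a_i$'s unit of flow gives $f_{S'}(e) = f^{-i}_S(e) + 1 \le c(e)$; on the edges of $M_i(\D) \setminus P^\ast$ the flow only decreases, so capacities are preserved everywhere. Then I would compare costs: the reaction set $R_i(S') = R_i(S)$ is unaffected by $a_i$'s own routing (both are determined by $f^{-i}$, which depends only on the other agents), and the assignment $P^\ast$ realises the strict minimum $w(P^\ast) < w(M_i(\D))$, so $a_i$ is strictly better off at $S'$; every other agent $a_j$ keeps her assigned path intact, $S'_j = S_j$, and feasibility of $S'$ keeps this path itself available as a reaction for $a_j$ under $S'$.

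The main obstacle is the second half of the cost comparison, because the cost of $a_j$ is defined as the minimum weight over $R_j(S')$, and rerouting $a_i$ perturbs $R_j$ in both directions: the edges in $M_i(\D) \setminus P^\ast$ become less congested (which can only enlarge $R_j$), while the edges in $P^\ast \setminus M_i(\D)$ become more congested (which might shrink it). To handle this cleanly I would lean on the fact that $S_j = S'_j$ itself still lies in $R_j(S')$, furnishing an upper bound on $cost_j(S', D_j)$ that does not exceed the value this assigned path already provided for $S$, and argue via capacity bookkeeping that no reaction achieving $cost_j(S, D_j)$ on $S$ is lost without being compensated by one newly enabled on the freed edges. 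Combined with the strict improvement for $a_i$, this yields the Pareto domination that contradicts the Pareto optimality of $S$ and closes the proof.
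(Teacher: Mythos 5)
Your proposal is essentially the paper's own proof: assume the assigned path is not a minimum-weight reaction, reroute agent $a_i$ alone onto the cheaper path $P^\ast\in R_i(S)$, verify feasibility directly from the spare-capacity condition defining $R_i(S)$, and contradict Pareto optimality. The only place you go beyond the paper is the ``obstacle'' you flag --- that rerouting $a_i$ perturbs the reaction sets $R_j$ of the other agents, so their costs (defined as minima over $R_j$) need not be preserved --- which the paper's proof simply sidesteps by asserting $cost_j(M')=cost_j(M)$ for all $j\neq i$; your proposed capacity bookkeeping for that step is only sketched (and is the one genuinely delicate point), but since the published proof offers nothing more there, your argument matches it in substance.
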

\iftoggle{proof_thm_1}{
\begin{proof}
Let us suppose by contradiction that there exists a reaction $r'_i\in R_i$ that is feasible and strictly better than $M_i(\D)$, i.e. $c(r'_i)<c(M_i(\D))$.
Then, a route assignment $M'$ such that $M'_j = M_j$ for all $j\neq i$ and $M'_i = r_i$ is still feasible.  Since $cost_j(M')=cost_j(M')$ for all $j\neq i$, and $cost_i(M')<cost_i(M)$, $M$ is not Pareto optimal. 
\end{proof}
}


Finally, a mechanism $M$ is \emph{non-bossy} if $M_i(\D) = M_i(D'_i,\DI)$ implies that $M_j(\D) = M_j(D'_i,\DI)$, for all $a_i,a_j\in N$ and all $\D$ and $D'_i$. In other words, non-bossyness excludes (arguably undesirable) mechanisms that allow one agent to change the allocation of other agents without changing her own too.
In the remainder of this paper, we focus on strategyproof mechanisms for TAP that approximately achieve the optimal social cost. In particular, we are interested in mechanisms that are also Pareto-optimal and non-bossy.


\section{Deterministic Mechanisms}\label{sec:deterministic_mechs}
In this section, we discuss deterministic mechanisms for TAP.
In particular, we first provide a lower bound on the approximation ratio of SP deterministic mechanisms.

\begin{theorem}\label{thm:deterministic_LB}
There is no $\alpha$-approximate deterministic SP mechanism for the traffic assignment problem with $\alpha<3 - \varepsilon$, for any $\varepsilon>0$.
\end{theorem}
\iftoggle{proof_thm_1_sketch}{
\begin{proof}
\emph{(Sketch)}
Consider the graph depicted in Figure \ref{fig:LB_instance_2}, where the labels on the edges represent their capacity (red) and length (black). There are two agents $a_1$ and $a_2$ at node $A$, whose intended destinations are, respectively, $D$ and $G$.
The length of path $(B,C)$ is $K = \min\{2, \frac{10-4\varepsilon}{\varepsilon}\}$ and the length of path $(F,E)$ is $K-1$. 
\begin{figure}[h]
\centering
\includegraphics[width=.55\linewidth]{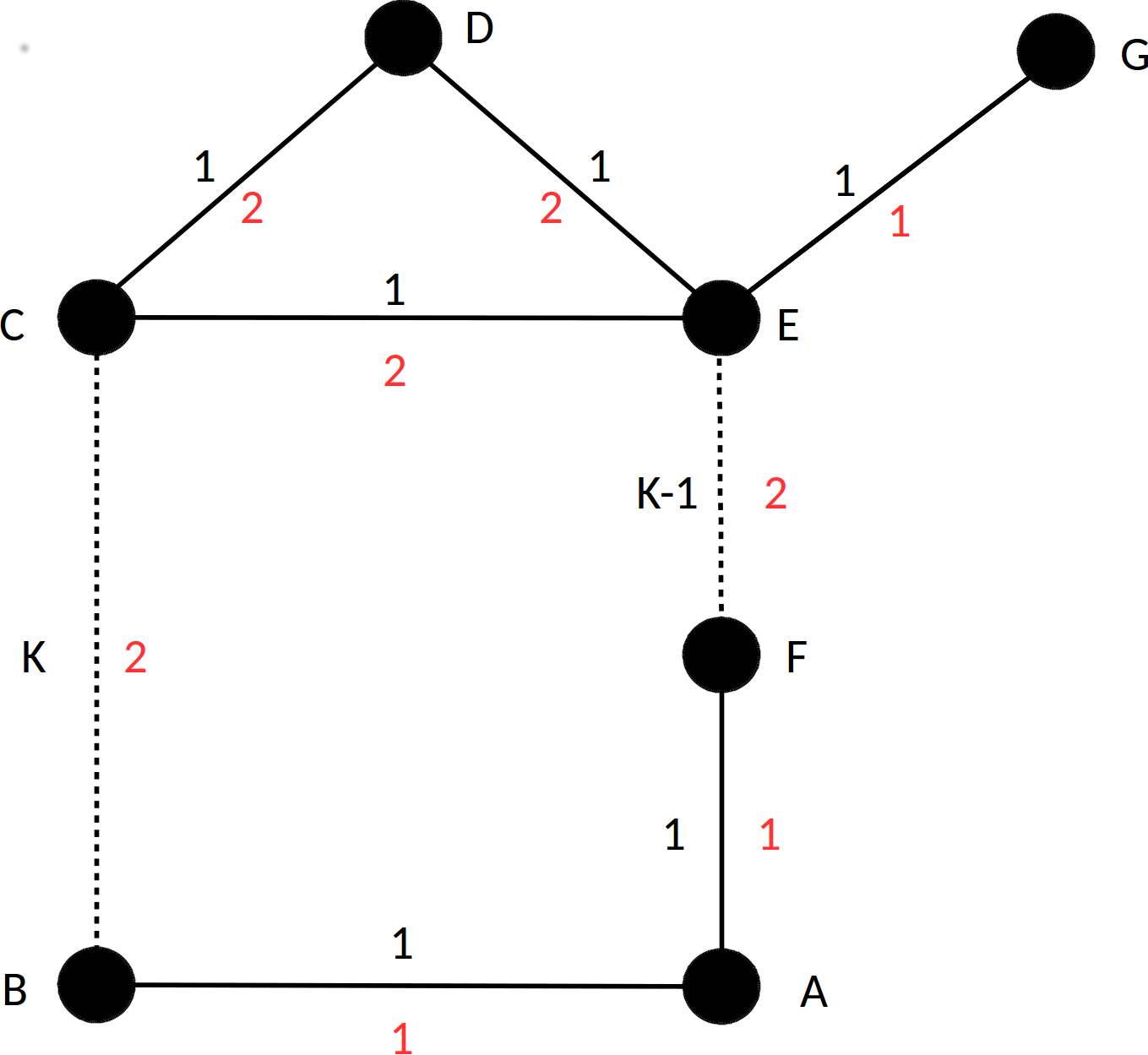}
\caption{Lower bound instance}\label{fig:LB_instance_2}
\end{figure}
The instance has two Pareto optimal solutions, depending on which player is allocated the edge $(A,F)$. The optimal allocation $\mathbf{P}^*=OPT(\D)$ is $P^*_1 = (A,B,C,D)$ and $P^*_2 = (A,F,E,G)$, $cost_1(\mathbf{P}^*,D) = K+2$ and $cost_2(\mathbf{P}^*,G) = K+1$, and $SC(\mathbf{P}^*,\D)=2K+3$. The second best solution is $P_1 = (A,F,E,D)$ and $P_2 = (A,B,C,E,G)$. 
$M$ achieves an approximation not better than $3-\varepsilon$, a contradiction.
If $M$ returns the optimal allocation, agent $a_1$ declares $D'_1=F$ instead of her true type. By SP $M$ cannot allocate the edge $(A, F)$ to $a_1$ ($a_1$ can use path $(A,F,E,D)$ and reach her true destination).
Therefore, $M(D_1', D_2)$ must return $P'_1 = (A,B,C,E,F)$ and $P'_2 = (A,F,E,G)$, with $SC((P'_1,P'_2), (D_1', D_2)) = 3K+2$, whilst the optimum for this instance is $OPT_1(D'_1,D_2) = (A,F)$ and $OPT_2(D'_1,D_2) =(A,B,C,E,G)$. The social cost of the optimum is then $SC(OPT(D'_1,D_2),(D'_1,D_2)) = K+4$.
Therefore $M$ has an approximation ratio higher than $3-\varepsilon$.
If $M(\D)$ returns $M_1(\D) = P_1$ and $M_2(\D) = P_2$, then agent $a_2$ reports $D'_2=F$ instead of her true type.
The optimal allocation is $OPT_1(D_1,D_2') = (A,B,C,D)$ and $OPT_2(D_1,D_2') = (A,F)$, and $SC(OPT(D'_2,D_1),(D'_2,D_1)) = K+3$
(i.e., agent $a_2$ can use the route $(A,F,E,G)$ to reach her true destination).
In this case the best (in terms of approximation ratio) SP allocation is $P'_1=(A,F,E,D)$ and $P'_2=(A,B,C,E,F)$, with a cost of $SC((P'_1,P'_2),(D_1,D_2')) = 3K+2$.
This solution has an approximation ratio higher than $3-\varepsilon$.
\end{proof}
}{
\iftoggle{proof_thm_1}{
\begin{proof}
Given $\varepsilon>0$, consider the graph depicted in Figure \ref{fig:LB_instance_2}, where the labels on the edges represent their capacity (red) and length (black). The instance we consider has two agents $A= \{a_1,a_2\}$, both initially located at node $A$, whose intended destination is $D$ and $G$, respectively (namely, $\D=(D,G)$).
The length of the path $(B,C)$ is $K = \min\{2, \frac{10-4\varepsilon}{\varepsilon}\}$ and the length of path $(F,E)$ is $K-1$. 
\begin{figure}[h]
\centering
\includegraphics[width=.55\linewidth]{figures/lower_bound_instance.png}
\caption{Lower bound instance}\label{fig:LB_instance_2}
\end{figure}
Let us consider a generic $\alpha$-approximate mechanism $M$.
Assume by contradiction that $M$ is strategyproof and $\alpha$-approximate with $\alpha<3-\varepsilon$. 
The instance has two Pareto optimal solutions, depending on which player is allocated the edge $(A,F)$ (note that only one agent at a time can use edge $(A,F)$ as its capacity is 1). The optimal allocation $\mathbf{P}^*=OPT(\D)$ is $P^*_1 = (A,B,C,D)$ and $P^*_2 = (A,F,E,G)$, $cost_1(\mathbf{P}^*,D) = K+2$ and $cost_2(\mathbf{P}^*,G) = K+1$, and $SC(\mathbf{P}^*,\D)=2K+3$. The second best solution is $P_1 = (A,F,E,D)$ and $P_2 = (A,B,C,E,G)$. We note that $cost_1(P_1,D_1)=K+1$ and $cost_2(P_2,D_2)=K+3$, for a social cost of $SC(M(\D),\D) = 2K+4$. 
We are going to prove that, regardless of the solution $M$ returns on this instance, there is another instance where to maintain SP, $M$ achieves an approximation not better than $3-\varepsilon$, a contradiction.
Let us assume first that $M$ returns the optimal allocation.
If agent $a_1$ declares $D'_1=F$ instead of her true type, by SP, $M$ cannot allocate the edge $(A, F)$ to $a_1$. In fact, assume for the sake of contradiction that $M(D_1', D_2)$ allocates $(A,F)$ to $a_1$.
Then $a_2$ is allocated path $(A,B,C,E,G)$ and $a_1$ can use path $(A,F,E,D)$ and reach her true destination, thus having: 
 $$cost_1(M(D'_1,D_2),D_1)=K+1<cost_1(M(\D),D_1)=K+2.$$ 
Therefore, $M(D_1', D_2)$ must return $P'_1 = (A,B,C,E,F)$ and $P'_2 = (A,F,E,G)$, with $SC((P'_1,P'_2), (D_1', D_2)) = 3K+2$, whilst the optimum for this instance is $OPT_1(D'_1,D_2) = (A,F)$ and $OPT_2(D'_1,D_2) =(A,B,C,E,G)$. The social cost of the optimum is then $SC(OPT(D'_1,D_2),(D'_1,D_2)) = K+4$.
Therefore $M$ has an approximation ratio higher than $3-\varepsilon$.
Let us now suppose that $M(\D)$ returns $M_1(\D) = P_1$ and $M_2(\D) = P_2$.
In this case, consider the case that agent $a_2$ reports $D'_2=F$ instead of her true type.
The optimal allocation is $OPT_1(D_1,D_2') = (A,B,C,D)$ and $OPT_2(D_1,D_2') = (A,F)$, and $SC(OPT(D'_2,D_1),(D'_2,D_1)) = K+3$.
As before, this allocation is not strategyproof as:
\begin{eqnarray*}
cost_2(OPT_2(D_1, D'_2),D_2)= K+1 \\ 
< cost_2(M_2(D_1,D'_2),D_2)=K+3 
\end{eqnarray*}
(i.e., agent $a_2$ can use the route $(A,F,E,G)$ to reach her true destination).
As above, one can easily check that in this case the best (in terms of approximation ratio) strategyproof allocation is $P'_1=(A,F,E,D)$ and $P'_2=(A,B,C,E,F)$, with a cost of $SC((P'_1,P'_2),(D_1,D_2')) = 3K+2$.
This solution has an approximation ratio higher than $3-\varepsilon$.
\end{proof}
}
}

The above theorem  implies the following corollary:

\begin{corollary}
The optimal allocation is not strategyproof for TAP.
\end{corollary}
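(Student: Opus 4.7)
The plan is to derive this corollary almost immediately from Theorem~\ref{thm:deterministic_LB}. By definition, the optimal mechanism $OPT$ satisfies $SC(OPT(\D),\D)\leq 1\cdot SC(OPT(\D),\D)$ for all $\D$, so its approximation ratio equals $1$. Applying Theorem~\ref{thm:deterministic_LB} with any fixed $\varepsilon\in(0,2)$, say $\varepsilon=1$, every deterministic SP mechanism must have approximation ratio at least $3-\varepsilon=2>1$. Therefore $OPT$ cannot be SP, which is exactly the statement of the corollary.

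For a more concrete and self-contained argument, one can alternatively point to the instance built in the proof of Theorem~\ref{thm:deterministic_LB} (Figure~\ref{fig:LB_instance_2}) and exhibit a direct profitable manipulation against the optimal allocation. On that instance, $OPT(\D)$ returns $P^*_1=(A,B,C,D)$ and $P^*_2=(A,F,E,G)$, giving $cost_1(\mathbf{P}^*,D_1)=K+2$. If $a_1$ misreports $D'_1=F$, the optimal allocation on $(D'_1,D_2)$ assigns $a_1$ the single edge $(A,F)$ and routes $a_2$ along $(A,B,C,E,G)$; once the edge $(A,F)$ is hers, $a_1$ can react by following $(A,F,E,D)$, which is feasible in the residual graph and leads to her true destination $D$ at cost $K+1<K+2$. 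This violates the SP inequality~\eqref{eq:sp} and shows, without appealing to approximation bounds, that $OPT$ is manipulable. There is no genuine obstacle here; either route is essentially a one-line deduction, and I would present the first (invoking Theorem~\ref{thm:deterministic_LB} directly) for brevity.
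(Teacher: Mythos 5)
Your primary argument (OPT is $1$-approximate, so Theorem~\ref{thm:deterministic_LB} with any $\varepsilon\in(0,2)$ rules out its strategyproofness) is exactly the deduction the paper intends when it states the corollary as an immediate implication of that theorem. The alternative direct-manipulation argument you sketch is also sound, but it is just the manipulation already embedded in the proof of Theorem~\ref{thm:deterministic_LB}, so nothing genuinely new is added.
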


These impossibility results suggest that in order to achieve strategyproofness we have to give up on optimality.
This naturally leads to asking to what extent can we approximate the optimal social welfare while satisfy the desired properties. 
As a first step to answer this question, we examine the well-known Serial Dictatorship mechanism that is deterministic and notoriously satisfies our three desiderata (i.e., strategyproofness, Pareto optimality and non-bossiness).
\begin{definition}
Mechanism Serial Dictatorship (SD), given an ordering $a_1\prec, \ldots, \prec a_n$ of the agents, allocates paths to agents in $n$ stages such that at stage $i$ agent $a_i$ is allocated her minimum cost path in the residual graph $G_f^{-\{a_1,\ldots,a_{i-1}\}}$.
\end{definition}

The following theorem proves that SD is indeed feasible under some mild conditions:
\begin{theorem}
If $G$ is $K$-edge-connected\footnote{A graph is $K$-edge-connected if it remains connected when strictly fewer than $K$ edges are removed.}, mechanism SD is feasible for $K$ agents.
\end{theorem}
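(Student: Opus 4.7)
The plan is to proceed by induction on the stage $i \in \{1,\ldots,K\}$ at which SD processes agent $a_i$, showing that at every such stage the residual graph admits an $O_i$-to-$D_i$ path (from which SD can extract a min-cost path). The base case $i=1$ is immediate: $K$-edge-connectivity of $G$ with $K \geq 1$ implies that $G$ is connected, so at least one $O_1$-to-$D_1$ path exists and all edge capacities $c(e) \geq 1$ are intact.

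For the inductive step, I would apply Menger's theorem to the $K$-edge-connected graph $G$ to obtain $K$ pairwise edge-disjoint $O_i$-to-$D_i$ paths $P^{1},\ldots,P^{K}$ in $G$, and then argue that after the $i-1 < K$ previous agents $a_1,\ldots,a_{i-1}$ have been routed, at least one of these remains unblocked in the residual graph $G_f^{-\{a_1,\ldots,a_{i-1}\}}$. The quantitative formulation I would aim for is a max-flow/min-cut statement: the $O_i$-to-$D_i$ max-flow in $G$ is at least $K$ (since every $O_i$-$D_i$ cut has at least $K$ edges, each of capacity $\geq 1$), and I would seek to show that after routing $i-1$ previous agents each carrying one unit of flow, the residual $O_i$-to-$D_i$ max-flow is at least $K-(i-1) \geq 1$, so that a feasible $O_i$-to-$D_i$ path exists.

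The main obstacle is the cut bookkeeping: a single previous agent's simple path from $O_j$ to $D_j$ can in principle cross a fixed $O_i$-to-$D_i$ cut more than once, so a naive argument charging each previous agent one unit of cut-capacity reduction does not immediately go through. To handle this I would exploit the fact, emphasized in the paper, that we work on the \emph{time-expanded} network: each previous agent occupies a time-indexed trajectory that uses each time-copy of an edge at most once, and distinct time-copies have independent capacities. The $K$ edge-disjoint $O_i$-to-$D_i$ paths in $G$ lift to edge-disjoint paths in the time-expanded network, and each previous agent blocks at most a single unit of capacity per cut in that network, giving the desired $K-(i-1)$ residual lower bound. An equivalent combinatorial route would be an explicit exchange argument on $P^{1},\ldots,P^{K}$, rerouting the previous paths through edge-disjoint alternatives to expose a free path for $a_i$; I expect this cut/exchange step to be the principal technical hurdle, while the inductive skeleton is routine.
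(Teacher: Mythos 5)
Your skeleton (induction over stages plus Menger's theorem giving $K$ pairwise edge-disjoint $O_i$--$D_i$ paths) is the same idea as the paper's proof, which is a two-line appeal to exactly that fact. But your attempt has a genuine gap, and it is precisely the step you yourself flag as the principal hurdle: you never establish that routing $i-1$ previous agents reduces the residual $O_i$--$D_i$ max-flow by at most $i-1$. The obstacle is real. Under the paper's symmetric directed-edge convention, a simple path of a previous agent can cross a fixed $O_i$--$D_i$ cut in the forward direction up to $(K+1)/2$ times (each additional forward crossing is paid for by a backward crossing, but backward crossings consume the \emph{reverse} directed edges, whose capacities are separate), so a handful of earlier agents can in principle saturate all $K$ forward edges of a minimum cut well before stage $K$. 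Neither of your proposed repairs closes this. The time-expanded-network route changes the statement rather than proving it: in the time-expanded network an agent may traverse a different time-copy of the same physical edge, or wait on holdover edges, so feasibility there is essentially automatic and says nothing about the static residual graphs $G_f^{-\{a_1,\ldots,a_{i-1}\}}$ with respect to which SD is defined in Section 3. The exchange argument is only named, not performed, and it is the entire content of the theorem: the $i-1$ min-cost paths already committed by SD are fixed and cannot be rerouted, so one must show that \emph{whatever} greedy choices were made, some $O_i$--$D_i$ path retains positive residual capacity --- and this does not follow from the mere existence of $K$ edge-disjoint $O_i$--$D_i$ paths in $G$.

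To your credit, you have located the soft spot of the claim: the paper's own proof does not engage with this difficulty either (it asserts that the assigned paths ``will not overlap,'' which is neither justified nor what SD guarantees). So your diagnosis is better than the paper's argument, but as a proof your attempt is incomplete at the decisive step, and the cut-counting strategy you outline cannot be completed as stated.
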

\iftoggle{proof_thm_3}{
\begin{proof}
If the graph is $K$-edge connected, the allocation returned by the Serial Dictator will always be feasible, (i.e. paths assigned to different agents will not overlap and there is always an assignable path for each agents).
This follows from the fact that in a $K$-edge-connected graph there are at least $K$ edge disjoint paths between any pair of nodes.
\end{proof}
}

Next we provide an upper bound on the approximation ratio of SD, and thus, on its worst case performance.
In order to prove our result, we make the following assumption:
\begin{definition}
The \emph{deviation on capacious path assumption (DoCP)} assumes that whenever the SD mechanism allocates  to an agent a path that is different from the one that the optimal mechanism would allocate, the assigned path has sufficient capacity to potentially be allocated to all the remaining agents. 
\end{definition}
To better understand this assumption, consider the following example.
With reference to Figure \ref{fig:capacious_paths}, let $a_i$ be an agent and $P^*_i$ be the path she is assigned in the optimal allocation (i.e., $OPT_i= P^*_i$).
If agent $a_i$ is not assigned $P^*_i$ by SD, there must be an agent $a_j$, where $j\prec i$ in the ordering used by SD, such that: (\emph{i}) $SD_j= P_j\neq OPT_j$ and (\emph{ii}) $P_j\cap P^*_i \neq \emptyset$ and (\emph{iii}) at least one edge of $P^*_i$ is saturated after $a_j$ is assigned $P_j$.
In such a situation, we say that agent $a_i$ is blocked by agent $a_j$.
Let $\alpha_i \in P_j\cap P^*_i$ ($\beta_i \in P_j\cap P^*_i$, respectively) be the first (last, respectively) node of $P^*_i$ in $P_j$. 
The DoCP assumption postulates that if $a_j$ blocks $a_i$, then the \emph{alternative path of blocked agent $a_i$ through blocking agent $a_j$} $\Gamma_{i}^{j} = (O_i,\alpha _i,O_j,D_j,\beta_i,D_i)$ has at least capacity $n-|\{a_k \in A | a_j \prec a_k\}|$ in the residual graph $G_f^{-\{a_1,\ldots
,a_{j}\}}$. That is, all agents yet to be assigned by SD after $a_j$ can be accommodated on this path.
We note that, by construction, if agent $a_i$ is blocked by agent $a_j$ then path $\Gamma^{j}_{i}$ always exists, although unless we assume DoCP, it might not have spare capacity to be assigned to agent $a_i$.
\begin{figure}
\centering
\includegraphics[width=.40\linewidth]{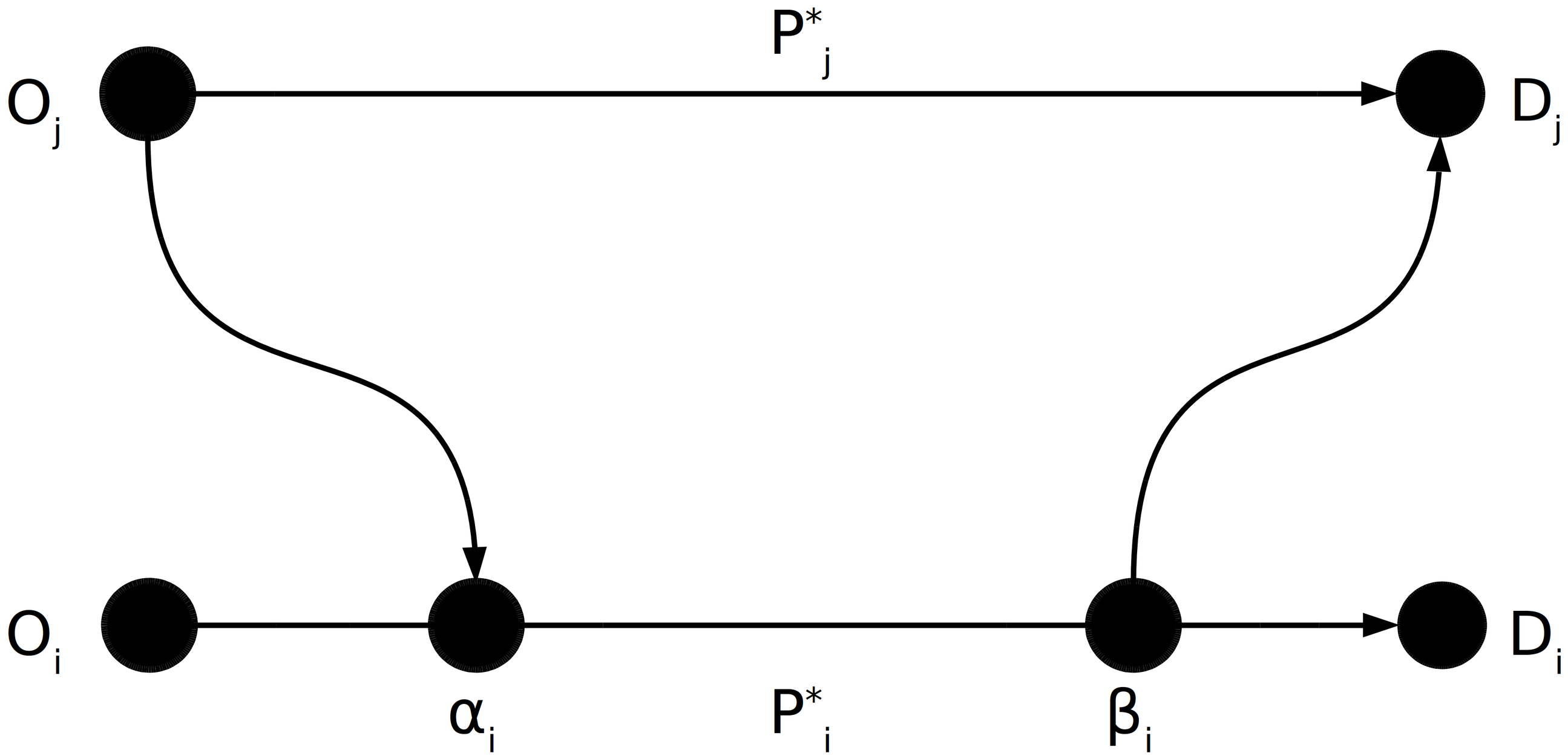}
\caption{Deviation on capacious paths}\label{fig:capacious_paths}
\end{figure}
It is not difficult to see that if we relax the DoCP assumption, then the approximation ratio of SD is not bounded by any function of the number of agents on certain pathological TAP instances. 

\begin{theorem}\label{thm:apx_SD}
Under the DoCP assumption, SD is at most $(2^n-1)$-approximate.
\end{theorem}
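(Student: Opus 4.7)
The plan is to index the agents in SD's processing order $a_1 \prec \cdots \prec a_n$, let $c_i := w(SD_i(\D))$ and $c^*_i := w(OPT_i(\D))$, and to establish a per-agent recursive inequality of the form $c_i \leq c^*_i + 2\, c_{j(i)}$, where $a_{j(i)}$ is an ``earlier blocker'' of $a_i$. Summing this bound over all $i$ and exploiting a geometric-sum computation will yield the $(2^n-1)$ factor.

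For a fixed $i$, I would examine the residual graph $G_f^{-\{a_1,\ldots,a_{i-1}\}}$ at the moment SD processes $a_i$. If $OPT_i(\D)$ remains feasible there, then, because SD picks a shortest feasible path, we obtain the easy bound $c_i \leq c^*_i$. Otherwise, by the definition of blocking there is some $a_j$ with $j \prec i$ whose SD-assigned path $P_j$ intersects $P^*_i$ and saturates at least one of its edges. DoCP then guarantees that the alternative path $\Gamma^j_i = (O_i,\alpha_i,O_j,D_j,\beta_i,D_i)$ has enough spare capacity to be assignable to $a_i$, so SD's choice satisfies $c_i \leq w(\Gamma^j_i)$. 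To bound $w(\Gamma^j_i)$, I would decompose it into (i) the $O_i$-to-$\alpha_i$ and $\beta_i$-to-$D_i$ sub-paths of $P^*_i$, which are edge-disjoint and together contribute at most $c^*_i$, plus (ii) the central detour $\alpha_i \to O_j \to D_j \to \beta_i$. Using edge symmetry, this detour can be realized along portions of $P_j$ traversed at most twice, giving detour cost at most $2 w(P_j) = 2 c_j$. Combining yields $c_i \leq c^*_i + 2 c_{j(i)}$ for some $j(i) < i$ (with the convention $c_{j(i)} = 0$ if $a_i$ is unblocked).

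With this recursion in hand, the rest is arithmetic. Taking the worst case $j(i) = i-1$ and inducting gives $c_i \leq \sum_{k=1}^i 2^{i-k} c^*_k$; summing over $i$ and swapping the order of summation then produces
\[
\sum_{i=1}^n c_i \;\leq\; \sum_{k=1}^n c^*_k \sum_{i=k}^n 2^{i-k} \;=\; \sum_{k=1}^n (2^{n-k+1}-1)\, c^*_k \;\leq\; (2^n-1)\sum_{k=1}^n c^*_k,
\]
which is exactly $SC(SD(\D),\D) \leq (2^n-1)\, SC(OPT(\D),\D)$ and establishes the claim.

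The main obstacle I anticipate lies in the detour-cost step. First, one must be explicit that the purported walk $\Gamma^j_i$ really corresponds to a feasible path assignment in the residual graph at the moment $a_i$ is processed — this is precisely the content of DoCP, so it becomes an application of the assumption rather than a separate argument. Second, the relative order of $\alpha_i$ and $\beta_i$ along $P_j$ may differ from their order along $P^*_i$, so the detour bound must be argued in both cases (or one handled WLOG by edge symmetry) to ensure it stays within $2 c_j$; in the ``wrong-order'' case, the key observation is that traversing a sub-segment of $P_j$ twice contributes only twice that sub-segment's weight, which is still subsumed by $2 w(P_j)$. Once the per-agent bound $c_i \leq c^*_i + 2 c_{j(i)}$ is secured, the double-sum swap and geometric summation are entirely routine.
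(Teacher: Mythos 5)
Your proof is correct and rests on the same two pillars as the paper's: a blocking agent whose DoCP detour $\Gamma^j_i$ upper-bounds SD's choice for $a_i$, and a doubling recursion that unrolls to $2^n-1$. The bookkeeping, however, is genuinely different. The paper argues in aggregate: it introduces the sub-instance optima $OPT_i$ (restricted to the first $i$ agents in SD's order), shows $w(P_i)\le OPT_i + SD_{i-1}$ by observing that every edge of $\Gamma^{\bar j}_i$ lies in $OPT_i\cup SD_{i-1}$, and obtains the recursion $SD_i\le 2\,SD_{i-1}+OPT_i$ on total costs. You instead prove the per-agent bound $c_i\le c^*_i+2c_{j(i)}$ against the full-instance optimum, charging the whole central detour to at most two traversals of the single blocker's SD path via edge symmetry; this avoids the sub-instance optima entirely and is a sharper intermediate statement (agent $a_i$ is penalized only by its own blocker rather than by the entire SD prefix), although after unrolling the worst-case chain $j(i)=i-1$ both routes land on the same $2^n-1$. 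One step you should make explicit, and which the paper does spell out: the blocker you invoke must itself deviate from its optimal path ($P_{j(i)}\neq P^*_{j(i)}$), since the paper's definition of ``blocking'' (and hence the DoCP guarantee) only applies to agents with $SD_j\neq OPT_j$; existence of such a blocker follows because if every agent saturating an edge of $P^*_i$ were on its optimal path, those same paths would render $P^*_i$ infeasible in the optimum itself, a contradiction. With that observation added, your argument is complete.
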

\iftoggle{proof_thm_4_sketch}{
\begin{proof}[Proof sketch]
We prove the claim by induction on the number of players. Let $OPT_i$ denote the cost and solution (with a slight abuse of notation) of the optimal allocation that only considers bids of agents $j \leq i$. Similarly, let $SD_i$ denote the cost and solution of $SD$ on input all the bids of agents $j \leq i$.
Base of the induction ($i=1$): trivially $OPT_1=SD_1$.
Now assume that the claim is true for $i-1$ and, for $j \leq i$, let $P^*_j$ ($P_j$, respectively) be the path assigned to agent $j$ by $OPT_i$ ($SD_i$, respectively). For a path $P$, we let $w(P)$ denote the cost of the path in the given graph G.
We want to prove that under the DoCP assumption, the following holds:
\begin{equation}\label{eq:paths}
w(P_i) \leq OPT_i + SD_{i-1}.
\end{equation}
If $P^*_i=P_i$ then we are done.
Therefore, we can assume that $P^*_i \neq P_i$.
This means that the paths $P_j$ allocated to agents $j < i$ by $SD_i$ saturate some of the edges of $P^*_i$.
Now, for at least one of these agents, say $\bar{j}$, $P^*_{\bar{j}} \neq P_{\bar{j}}$ for otherwise also in $OPT_i$ path $P^*_i$ would be unavailable to $i$. But then $w(P_i) \leq w(\Gamma^{\bar{j}}_i)$, $\Gamma^{\bar{j}}_i$ being the path that connects $O_i$ to $D_i$ through $O_{\bar{j}}$, as per the definition of DoCP.
Note that, under the DoCP assumption, $\Gamma^{\bar{j}}_i$ is always feasible.
Since $\Gamma^{\bar{j}}_i$ uses only edges in $OPT_i \cup SD_{i-1}$ (i.e. $P^*_i$ and $P^*_j$ are in $OPT_i$, paths $(O_i,\alpha_i)$ and $(\beta_i,D_j)$ belong to $SD_{i-1}$), \eqref{eq:paths} is proven.
We finally observe that \eqref{eq:paths} and the inductive hypothesis yield:
\begin{align*}
SD_i & = SD_{i-1} + w(\Gamma^{\bar{j}}_i)  \leq 2 SD_{i-1} + OPT_i \\
& \leq  2((2^{i-1}-1) OPT_{i-1}) + OPT_i \leq (2^i-1) OPT_i.
\end{align*}
\end{proof}
}{
\iftoggle{proof_thm_4}{
\begin{proof}
We are going to prove the claim by induction on the number of players. Specifically, let $OPT_i$ denote the cost and solution (with a slight abuse of notation) of the optimal allocation that only considers bids of agents $j \leq i$. Similarly, $SD_i$ denotes the cost and solution of $SD$ on input all the bids of agents $j \leq i$.
For the base of the induction with $i=1$ it is clear that $OPT_1=SD_1$.
Now assume that the claim is true for $i-1$ and, for $j \leq i$, let $P^*_j$ ($P_j$, respectively) be the path assigned to agent $j$ by $OPT_i$ ($SD_i$, respectively). For a path $P$, we let $w(P)$ denote the cost of the path in the given graph G.
We want to prove that under the DoCP assumption, the following holds:
\begin{equation}\label{eq:paths}
w(P_i) \leq OPT_i + SD_{i-1}.
\end{equation}
Let us begin by observing that if $P^*_i=P_i$ then we are done.
Therefore, we can assume that $P^*_i \neq P_i$.
This means that the paths $P_j$ allocated to agents $j < i$ by $SD_i$ saturate some of the edges of $P^*_i$.
Now, for at least one of these agents, say $\bar{j}$, $P^*_{\bar{j}} \neq P_{\bar{j}}$ for otherwise also in $OPT_i$ path $P^*_i$ would be unavailable to $i$. But then $w(P_i) \leq w(\Gamma^{\bar{j}}_i)$, $\Gamma^{\bar{j}}_i$ being the path that connects $O_i$ to $D_i$ through $O_{\bar{j}}$, as per the definition of DoCP.
Note that, under the DoCP assumption, $\Gamma^{\bar{j}}_i$ is always feasible.
Since $\Gamma^{\bar{j}}_i$ uses only edges in $OPT_i \cup SD_{i-1}$ (i.e. $P^*_i$ and $P^*_j$ are in $OPT_i$, paths $(O_i,\alpha_i)$ and $(\beta_i,D_j)$ belong to $SD_{i-1}$), \eqref{eq:paths} is proven.
We can then conclude the proof, by observing that \eqref{eq:paths}, along with the inductive hypothesis, yield:
\begin{align*}
SD_i = SD_{i-1} + w(\Gamma^{\bar{j}}_i) & \leq 2 SD_{i-1} + OPT_i \\
& \leq  2((2^{i-1}-1) OPT_{i-1}) + OPT_i \\ & \leq (2^i-1) OPT_i. 
\end{align*}
\end{proof}
}
}

As the $(2^n-1)$-approximation ratio can be prohibitively large for large $n$, we ask ourselves whether we can further improve this upper bound. Unfortunately, the following theorem answers this question in the negative.
\begin{theorem}\label{thm:tightness_SD}
Under the DoCP assumption, the bound of Theorem \ref{thm:apx_SD} is tight.
\end{theorem}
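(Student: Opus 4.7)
The plan is to exhibit, for each $n$ and each $\epsilon > 0$, a family of TAP instances satisfying the DoCP assumption on which the Serial Dictatorship mechanism achieves an approximation ratio at least $2^n - 1 - \epsilon$. Concretely, this will witness that both inequalities used in the inductive step of the proof of Theorem~\ref{thm:apx_SD}---namely $w(P_i) \leq OPT_i + SD_{i-1}$ and $OPT_{i-1} \leq OPT_i$---can simultaneously be made arbitrarily tight, so that the recursion $SD_i \approx 2 SD_{i-1} + OPT_i$ unrolls to exactly $(2^n - 1)\, OPT_n$ in the limit.

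First I would dispose of the base case $n = 1$ trivially (ratio $1$) and the case $n = 2$ by revisiting the graph of Figure~\ref{fig:LB_instance_2}: with adversarial tie-breaking, SD assigns the bottleneck edge $(A,F)$ to the first agent, so the second agent is forced onto a detour of cost approaching $2 \cdot OPT_1$, whereas an optimal allocation can let the two agents share the network at social cost arbitrarily close to $OPT_1$, driving $SD_2 / OPT_2$ to $3 = 2^2 - 1$. For the inductive step I would extend an $(n-1)$-agent worst-case instance by attaching a fresh origin--destination pair for agent $n$ together with (i) a ``cheap'' path of cost $\delta$ that is available only when none of the earlier SD-critical edges is saturated, and (ii) a ``detour gadget'' that forces agent $n$ onto the alternative path $\Gamma^{n-1}_n$ of cost $SD_{n-1} + OPT_n$ whenever agent $n-1$ blocks her. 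In the optimal allocation, each earlier agent sacrifices a negligible amount so that agent $n$ can use her $\delta$-path, keeping $OPT_n \leq OPT_{n-1} + O(\delta)$; in SD, the cascading blocks force $w(P_i) = SD_{i-1} + OPT_i$ up to an $O(\delta)$ additive loss for every $i \leq n$. Unwinding the recursion yields $SD_n / OPT_n \to 2^n - 1$ as $\delta \to 0$.

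The main obstacle is the explicit design of the detour gadget so that three properties hold simultaneously: (a) the cheap $\delta$-path is usable only when no earlier bottleneck is saturated, so that SD and OPT really disagree on every agent; (b) DoCP holds at every recursive step, i.e., the alternative path $\Gamma^{i-1}_i$ retains enough residual capacity to in principle serve all subsequent agents; and (c) no spurious shortcut is introduced that would let SD avoid the chain of forced detours or let OPT drop a bottleneck altogether. I expect the gadget to be realizable as a chain of capacity-one ``diamonds'' analogous to the bottleneck in Figure~\ref{fig:LB_instance_2}, augmented with high-capacity rerouting edges that guarantee DoCP; a direct induction on the chain length should then verify the three requirements and the claimed ratio.
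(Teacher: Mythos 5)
Your high-level idea---a cascading chain in which each agent's cheap path is saturated by the previous agent's SD assignment, forcing a detour of roughly doubled cost so that the recursion $SD_i \approx 2\,SD_{i-1} + OPT_i$ is tight at every step---is exactly the idea behind the paper's proof. However, your plan has a concrete error and a substantive gap. The error is in the $n=2$ base case: on the instance of Figure~\ref{fig:LB_instance_2}, SD (giving agent $a_1$ her cheapest path $(A,F,E,D)$ of cost $K+1$ and then $a_2$ the path $(A,B,C,E,G)$ of cost $K+3$) incurs social cost $2K+4$ against an optimum of $2K+3$, i.e.\ a ratio of $(2K+4)/(2K+3)$, which is close to $1$, not $3$. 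The factor $3$ associated with that figure is a lower bound for \emph{all} SP mechanisms and is extracted from a misreporting argument on a second instance; it is not the ratio SD achieves on the truthful instance. Your claim that ``an optimal allocation can let the two agents share the network at social cost arbitrarily close to $OPT_1$'' fails there, since both agents must traverse paths of length about $K$. To get ratio $3$ with two agents you need an instance where $OPT_2$ stays close to $OPT_1$ while SD's cost triples, which requires the first agent to have two nearly-equal-cost options (cost $1$ versus $1+\varepsilon$), the cheaper of which destroys the second agent's $\varepsilon$-cost path.

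The substantive gap is that the ``detour gadget'' carrying the induction is precisely the hard part, and you leave it as an acknowledged obstacle. The paper resolves it with a single explicit instance (Figure~\ref{fig:sd_tight_instance}): $n$ agents on a line of nodes $v_1,\dots,v_n$, all headed to a common destination $D$, where agent $a_1$ chooses between a direct path of cost $1+\varepsilon$ and a path of cost $1$ through $v_2$, and each $a_i$ ($i\geq 2$) chooses between a direct path of cost $\varepsilon$ and a detour of cost $2^{i-1}$ through $v_{i+1}$. All capacities are $1$, so SD's myopic choice for $a_1$ saturates $a_2$'s cheap path, and the blocks cascade with detour costs $1,2,4,\dots,2^{n-1}$ summing to $2^n-1$, while $OPT = 1+n\varepsilon$ remains essentially constant (only $a_1$ ``sacrifices'' $\varepsilon$; the later agents take their $\varepsilon$-paths untouched). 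This is the concrete realization of your ``chain of capacity-one diamonds,'' and writing it down explicitly is what turns your plan into a proof; without it, and with the faulty base case, the argument as stated does not go through.
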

\iftoggle{proof_thm_5}{
\begin{proof}
Let us consider the instance in Figure \ref{fig:sd_tight_instance}, where there are $n$ nodes $v_1,\ldots, v_n$ and $n$ agents $A=\{a_1,\ldots, a_n\}$ such that agent $a_i$ is initially located at node $v_i$. All agents want to reach the same destination $D$.
Each link has capacity $1$.
Agent $a_1$ has two paths to her destination $D$: one direct path that costs $1+\varepsilon$ (where $\varepsilon\ll 1$ is a small constant) and a path costing $1$ that goes through the node agent $a_2$ is initially located on.
Each agent $a_i$, for $i=2,\dots,n-1$ has two paths: one direct path that costs $\varepsilon$ and a path costing $2^{i-1}$ that goes through the node  agent $a_{i+1}$ is initially located on.
Agent $a_n$ has two direct paths, costing $\varepsilon$ and $2^{n-1}$ respectively.
The optimal traffic assignment assigns agent $a_1$ to the path that costs $1+\varepsilon$ and the other agents to the path costing $\varepsilon$, and has a cost of $1+\varepsilon n$.
Let us consider ordering $a_1\prec a_2 \prec \ldots \prec a_n$.
On this ordering, mechanism SD assigns agent $a_1$ the path costing $1$, and to each agent $a_i$, for $i=2,\ldots, a_n$ the path costing $2^{i-1}$ for a total cost of $\sum_{i=0}^{n-1} 2^i = 2^n-1$.
For $\varepsilon$ close to $0$, the approximation ratio of SD on the instance depicted in Figure \ref{fig:sd_tight_instance} is hence close to $2^n-1$.
\begin{figure}[h]
\centering
\includegraphics[width=.6\linewidth]{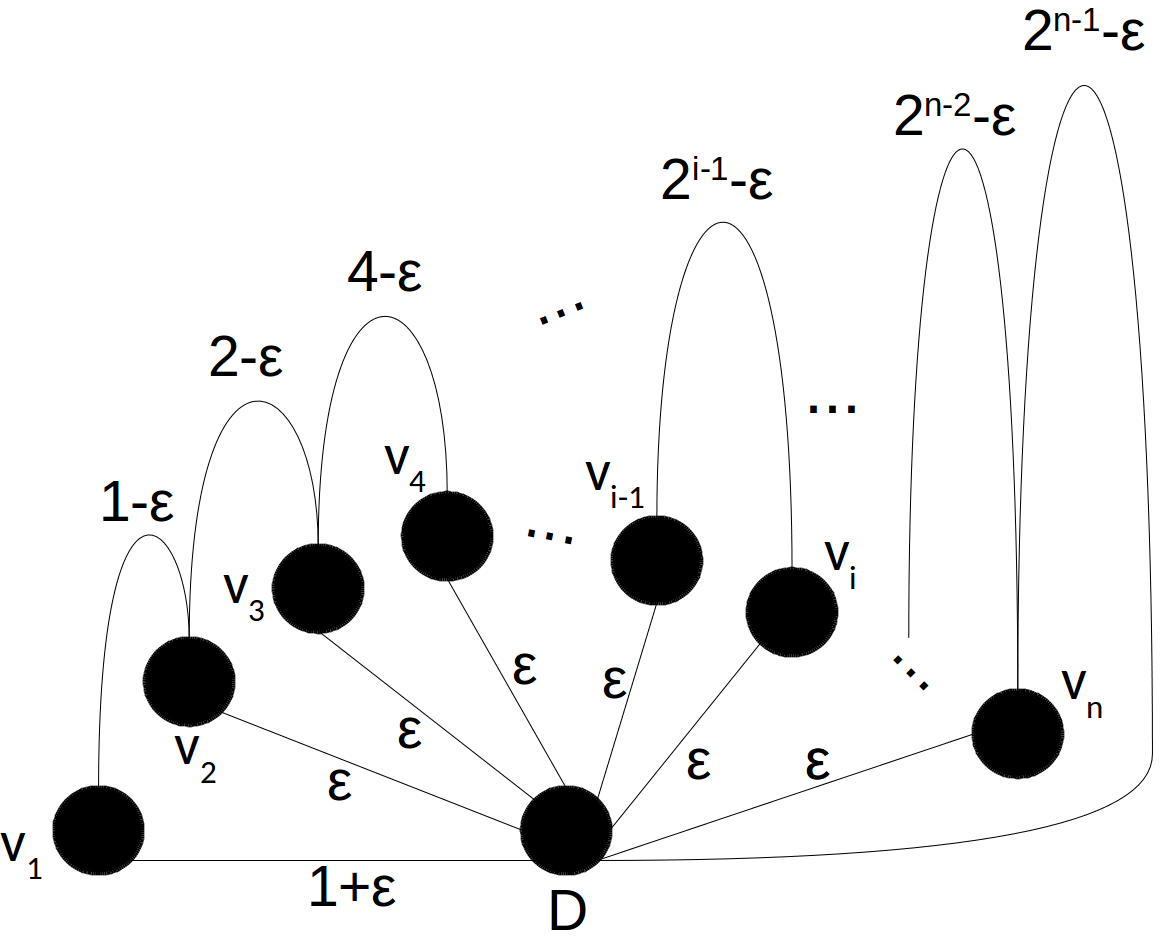}
\caption{Tight instance for SD}\label{fig:sd_tight_instance}
\end{figure}
\end{proof}
}

We now provide a characterization of SP, Pareto-optimal, and non-bossy mechanisms for a subset of instances of TAP, named \TAPStar\, and we prove that the family of all mechanisms satisfying the above properties is comprised by a generalization of SD, namely \emph{Bi-polar Serial Dictatorship} (BSD).
Such a characterization extends naturally to TAP instances.
\TAPStar{} is subset of instances of TAP having a peculiar structure: (\emph{i}) every agent has the same source node $O$; (\emph{ii}) $O$ has outgoing edges with unitary capacity and no ingoing edges, let $E_O = \{(O,v_1),\ldots, (O,v_m)\}$ denote the set of outgoing edges of $O$; and (\emph{iii}) the set of possible destinations that the agents can declare is restricted to a given subset $\mathcal{D}\subset V$.

\begin{definition}
Given an ordering of the agents $\{i_1,i_2\}\prec i_3\prec \ldots\prec i_n$ and a bipartition $\{X_1,X_2\}$ of the set of alternatives  $X$ (i.e., paths in the case of TAP) such that $X_1\cap X_2 = \emptyset$ and $X_1 \cup X_2 = X$, a BSD mechanism executes SD with ordering $i_2\prec i_1 \prec \ldots \prec i_n$ if $\min_{x\in X} cost_1(x)  = \min_{x\in X} cost_2(x) = x\in X_2$; otherwise SD with ordering $i_1\prec i_2\prec \ldots \prec i_n$ is executed.\end{definition}

\begin{theorem}\label{thm:reduction}
A traffic allocation mechanism for \TAPStar is Pareto-optimal, SP and non-bossy if and only if it is a Bi-polar Serially Dictatorial Rule.
\end{theorem}
\iftoggle{proof_thm_6_sketch}{
\begin{proof}[Proof sketch]
We reduce an instance of the problem of \emph{assigning indivisible objects} with general ordinal preferences \cite{DBLP:journals/jet/BogomolnaiaDE05} (AIO for short) to \TAPStar{}.
In an instance of AIO, a set of objects $X = \{x_1,\ldots ,x_m\}$ has to be assigned to a set of agents $A = \{a_1,\ldots ,a_n\}$, such that every agent receives at most one object and no agent is left without an object if there are objects still available.
Agents have ordinal general preferences $\succeq _i$, where $x \succeq_i y$ for $x,y\in X$ means that agent $i$ (weakly) prefers object $x$ to object $y$.
From an instance of AIO, we build an instance of \TAPStar{} as follows. 
\TAPStar{} has the same set of agents $A$ as AIO.
Graph $G$ of \TAPStar{} has a node $O$ such that $O_i=O$ for all $a_i \in A$.
For every object $x_j\in X$ we construct in $G$ a node $v_j$ and an edge $(O,v_j)$ such that $c(O,v_j)=1$ and $w(O,v_j) = \varepsilon$ for $0<\varepsilon\ll 1$.
Let $\Psi$ be the set of all possible preference relations over $X$. 
We construct $|\Psi|$ destination nodes $D_k$, one for each preference relation $\succeq \in \Psi$ and for each $k\in{1,\ldots, |\Psi|}$.
For each $j \in \{1,\ldots, m \}$ we add an edge $(v_j,D_k)$ having capacity 1 and weight $w(v_j,D_k)$ equal to the \emph{ranking} of $x_j$ according to $\succeq$.
We can now transform an instance of the so-constructed \TAPStar{} problem to an instance of the AIO problem, and vice versa.
In \cite{DBLP:journals/jet/BogomolnaiaDE05} it is proved that BSD is the only Pareto optimal, SP and non-bossy mechanism for AIO.
This characterization transfers to \TAPStar{} due to the reduction sketched above.
\end{proof}

}{
\iftoggle{proof_thm_6}{
\begin{proof}
We will reduce an instance of the problem of \emph{assigning indivisible objects} with general ordinal preferences \cite{DBLP:journals/jet/BogomolnaiaDE05} (AIO for short) to \TAPStar{}.
An instance of AIO is composed of a set of objects $X = \{x_1,\ldots ,x_m\}$ that have to be assigned to a set of agents $A = \{a_1,\ldots ,a_n\}$, such that every agent receives at most one object and no agent is left without an object if there are objects still available.
Agents have ordinal general preferences $\succeq _i$, where $x \succeq_i y$ for $x,y\in X$ means that agent $i$ (weakly) prefers object $x$ to object $y$.
From an instance of AIO, we can build an instance of \TAPStar{} as follows. 
\TAPStar{} has the same set of agents $A$ as AIO.
Graph $G$ of \TAPStar{} has a node $O$ such that $O_i=O$ for all $a_i \in A$.
For every object $x_j\in X$ we construct in $G$ a node $v_j$ and an edge $(O,v_j)$ such that $c(O,v_j)=1$ and $w(O,v_j) = \varepsilon$ for $0<\varepsilon\ll 1$.
Let $\Psi$ be the set of all possible preference relations over $X$. 
We construct\footnote{We note that, although $|\Psi|$ can be exponential in $m$, it is always finite. We remark that graphs of exponential size are not an issue here since the characterization we are proving in this theorem does not rely on computational efficiency.} $|\Psi|$ destination nodes $D_k$, one for each preference relation $\succeq \in \Psi$ and for each $k\in{1,\ldots, |\Psi|}$.
For each $j \in \{1,\ldots, m \}$ we add an edge $(v_j,D_k)$ having capacity 1 and weight $w(v_j,D_k)$ equal to the \emph{ranking}\footnote{
The alternatives in $X$ can be partitioned in subsets ${X_1,\ldots, X_\ell,\ldots}$ such that any two elements $x_1,x_2\in X_\ell$ are indifferent according to $\succeq$ and, for any $x_1\in X_\ell$ and $x_2 \in X_{\ell+1}$, $x_1$ is strictly preferred to $x_2$ according to $\succeq$. Then $\ell$ is the ranking of $x\in X_\ell$.} of $x_j$ according to $\succeq$.
Figure \ref{fig:reduction_example} gives an example of the reduction applied to an AIO game with $A=\{a_1,a_2,a_3\}$, $X=\{x_1,x_2,x_3\}$ and $\Psi$ being the set of all possible \emph{linear orderings} over $X$. The labels on the edges of the graph of Figure \ref{fig:reduction_example} represent the costs of the edges, whereas all capacities are set to 1.

By construction, the following hold: (\emph{i}) any path allocation on $G$ must include all the edges $(O,v_j)$; (\emph{ii}) any edge $(O,v_j)$ is used by at most one path; and (\emph{iii}) only one agent can be assigned any given edge $(O,v_j)$ due to the capacity constraint.
We can now easily transform an instance of the so-constructed \TAPStar{} problem to an instance of the AIO problem, and vice versa.
In \cite{DBLP:journals/jet/BogomolnaiaDE05} it is proved that BSD is the only Pareto optimal, SP and non-bossy mechanism for AIO.
This characterization trivially transfers to \TAPStar{} due to the reduction sketched above.
Indeed, let us suppose that there exists an SP, Pareto optimal and non-bossy mechanism for TAP.
Such mechanism would be SP, Pareto optimal and non-bossy for the AIO instance as well.
\end{proof}
}
}
Next, we investigate the performance of BSD and show that it does not asymptotically perform better than SD.
In particular, we state that:
\begin{lemma}
BSD cannot achieve an approximation ratio lower than $\Omega(2^n)$ for TAP.
\end{lemma}
\iftoggle{proof_lemma_1}{
\begin{proof}
We are going to show an instance of TAP where BSD has an approximation ratio of $\Omega(2^n)$.
Let us take the instance of Figure \ref{fig:sd_tight_instance} and let us consider the ordering $\{a_1,a_2\}\prec a_3\prec \ldots \prec a_n$.
Let us consider $X_1 = \{v_2,D\}$ and $X_2 = E\setminus X_1$.
The so-defined BSD mechanism, on input the instance of figure \ref{fig:sd_tight_instance} would always execute SD with ordering $a_1 \prec a_2 \prec \ldots, a_n$. We know from Theorem \ref{thm:tightness_SD} that under this ordering the approximation ratio of SD is $\Omega(2^n)$.
\end{proof}
}

\section{Randomized Mechanisms}\label{sec:randomized_mechs}
Given the undesirable approximation guarantees of deterministic mechanisms, we now turn to randomization. Randomized mechanisms can often be interpreted as fractional mechanisms for the deterministic solutions, under mild conditions.
We start by proving the following inapproximability lower bound:

\begin{theorem}\label{thm:randomizedLB}
There is no $\alpha$-approximate universally truthful randomized mechanism for the traffic assignment problem with $\alpha<11/10$.
\end{theorem}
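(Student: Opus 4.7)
The plan is to apply Yao's minimax principle. A universally truthful randomized mechanism is by definition a distribution over deterministic strategyproof mechanisms, so to lower bound the worst-case approximation ratio of any such mechanism it suffices to exhibit a distribution $\pi$ over TAP instances such that, for every deterministic SP mechanism $M^d$, the expectation under $\pi$ of $SC(M^d(\D),\D)/SC(OPT(\D),\D)$ is at least $11/10$. By Yao, this implies the existence of a single instance on which the randomized mechanism has ratio at least $11/10$.

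For the construction I would reuse (a refined version of) the two-agent gadget from the proof of Theorem~\ref{thm:deterministic_LB} (Figure~\ref{fig:LB_instance_2}) and build a small family of instances that differ only in the reported destination of one agent, tuning the edge weights $K$ and the extra length on $(F,E)$ so that the ratio works out to the desired constant. First I would enumerate the Pareto-optimal allocations on this family, and for each deterministic SP mechanism classify which allocation is produced on each instance. The SP constraint links these choices exactly as in the proof of Theorem~\ref{thm:deterministic_LB}: if $M^d$ outputs the optimum on instance $\D$, then SP forces a sub-optimal output on the neighbouring instance $\D'$ obtained by reporting some $D_i'$ (otherwise an agent profitably misreports), and symmetrically.

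With this classification in hand, the next step is to pick the probabilities $\pi(\D)$ of the instances in the family and the numerical weights on the edges so that for every admissible pairing of outputs enforced by SP, the $\pi$-averaged ratio is exactly (or at least) $11/10$. Concretely, I would write down a small linear program whose variables are the edge lengths and the probabilities, whose constraints encode ``every SP assignment pattern yields expected ratio $\geq 11/10$,'' and maximize $11/10$ over feasibility; this is where the value $11/10$ arises. Finally, one verifies that the instance is still $K$-edge-connected enough for all the listed mechanisms to be feasible, and that the optimum used in the denominator is indeed the Pareto optimum identified in step one.

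The main obstacle is the parameter tuning: while the deterministic lower bound of~$3$ comes from forcing the mechanism to pay the cost on a single worst-case instance, here the SP constraints only hurt the mechanism on one of the paired instances, and the randomization can hide the loss by averaging. Making the two instances ``equally painful'' for every deterministic SP rule, while keeping the optima comparable so that the averaged ratio does not collapse below $11/10$, is the delicate calculation; it is essentially a two-variable optimization over edge weights and mixing probability, and the constant $11/10$ reflects the exact extremum of that optimization.
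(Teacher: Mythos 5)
Your plan is essentially the paper's proof: Yao's minimax principle applied to a two-point distribution over the instance $I$ of Figure~\ref{fig:LB_instance_2} and the neighbouring instance $I'$ in which $a_1$ reports $F$, with strategyproofness forcing every deterministic truthful mechanism to allocate the edge $(A,F)$ to the same agent in both instances (exactly the linkage established inside the proof of Theorem~\ref{thm:deterministic_LB}). The one substantive thing you leave open is the calculation you delegate to a linear program, and that is really the whole content of the proof, so let me record that it closes with the paper's choices: set $K=2$ and put probability $\lambda=2/3$ on $I$ and $1/3$ on $I'$. The expected optimum is $\lambda(2K+3)+(1-\lambda)(K+4)=(\lambda+1)K+4-\lambda=20/3$; a mechanism giving $(A,F)$ to $a_1$ in both instances pays $\lambda(2K+4)+(1-\lambda)(K+4)=(\lambda+1)K+4=22/3$, and one giving it to $a_2$ pays $\lambda(2K+3)+(1-\lambda)(3K+2)=(3-\lambda)K+\lambda+2=22/3$, so both branches yield exactly $22/20=11/10$. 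Two small cautions. First, the paper bounds the \emph{ratio of expected costs} $\mathbb{E}_\pi[SC(M)]/\mathbb{E}_\pi[SC(OPT)]$, whereas you propose to bound the \emph{expectation of the per-instance ratio}; both are legitimate ways to instantiate Yao's principle for $\alpha$-approximation (and in this gadget both happen to give $11/10$, the latter with $\lambda=7/10$), but you should fix one notion before writing your optimization, since the equalizing $\lambda$ differs. Second, the classification step is not over ``all Pareto-optimal allocations'' in the abstract but reduces to the single binary choice of who receives $(A,F)$; making that reduction explicit is what lets the case analysis terminate.
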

\iftoggle{proof_thm_7}{
\begin{proof}
Our approach is based on Yao's minimax principle \cite{Yao}. In our context, this principle states that the approximation ratio of the best universally truthful randomized mechanism is equal to the approximation ratio of the best deterministic truthful mechanism under a worst-case input distribution. Accordingly, we exhibit a probability distribution over input instances for which any deterministic truthful mechanism cannot attain an approximation guarantee better than $11/10$.
The two instances are taken from the proof of Theorem \ref{thm:deterministic_LB}, where we set $K=2$. Specifically, we consider the instance in Figure \ref{fig:LB_instance_2}, that we name $I$, and the very same instance where agent $a_1$ reports $F$; we call this instance $I'$. We consider a probability distribution over $I$ and $I'$ that returns $I$ with probability $\lambda$ and $I'$ with the remaining probability $1-\lambda$, where $\lambda=2/3$.  The expected value of the optimum will then be $(\lambda+1)K+4-\lambda=20/3$.
Let $M$ be a SP deterministic mechanism. From the arguments in the proof of the theorem above, we know that $M$ must assign the edge $(A,F)$ to the same agent in both instances $I$ and $I'$. If $M$ allocates $(A,F)$ to agent $a_1$ in both the instances then its expected social cost will be $(\lambda+1)K+4=22/3$ for an approximation ratio of $11/10$. If instead $M$ allocates $(A,F)$ to agent $a_2$ in both the instances then the expected social cost of the mechanism will be $(3-\lambda)K+\lambda+2=22/3$; the approximation ratio of $M$ would then be $11/10$.
\end{proof}
}
%

In the remainder of this section, we study the randomized version of SD for TAP, which is universally strategyproof, (ex-post) Pareto optimal and non-bossy.

\begin{definition}
The Randomized Serial Dictatorship (RSD) mechanism computes uniformly at random an ordering $\sigma$ over the agents and returns the output of SD over ordering $\sigma$.
\end{definition}

The following results gives a tight bound on the approximation ratio of RSD. 
\begin{theorem}
\label{theorem:RSD_upperbound}
Under the DoCP assumption, RSD is at most $n$-approximate.
\end{theorem}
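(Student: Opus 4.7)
The plan is to establish the claim by linearity of expectation together with a per-agent bound on the expected cost. Writing $\sigma$ for the uniformly random ordering sampled by RSD and $P_i^\sigma$ for the path assigned to $a_i$ by Serial Dictatorship on $\sigma$, I decompose
\[
\mathbb{E}\bigl[SC(RSD(\D),\D)\bigr] \;=\; \sum_{a_i \in A} \mathbb{E}_\sigma\bigl[w(P_i^\sigma)\bigr].
\]
It therefore suffices to show $\mathbb{E}_\sigma[w(P_i^\sigma)] \le SC(OPT(\D),\D)$ for every agent $a_i$, after which summation over the $n$ agents yields the $n$-approximation.

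For the per-agent bound, I would fix $a_i$ and condition on her position $k\in\{1,\ldots,n\}$ within $\sigma$, which is uniformly distributed. Under the DoCP assumption, the charging construction used in the proof of Theorem~\ref{thm:apx_SD} guarantees that when SD reaches $a_i$ at position $k$, either her optimum path $P_i^*$ is still feasible in the residual graph---so that $w(P_i^\sigma)\le w(P_i^*)\le SC(OPT(\D),\D)$---or she is blocked by some predecessor $\bar{j}$ and routable along the alternative path $\Gamma^{\bar{j}}_i$, whose edges lie in the OPT paths of the first $k$ agents and the SD paths of the first $k-1$. Iterating this construction down the chain of distinct blockers, the cost of $\Gamma^{\bar{j}}_i$ telescopes into a sum of weights of OPT and SD paths that I would then average against the uniform distribution of $\sigma$.

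The principal obstacle is that the deterministic inequality $w(P_i^\sigma)\le OPT_k+SD_{k-1}$ from Theorem~\ref{thm:apx_SD} unfolds into the geometric recursion $SD_k\le 2\, SD_{k-1}+OPT_k$, and hence a worst-case blow-up of $2^n-1$, which is far larger than the target factor of $n$. The crux is to show that this exponential amplification collapses in expectation over random orderings. Since every agent is equally likely to be at every position and the first $k$ agents form a uniformly random subset of $A$, one expects the $2\, SD_{k-1}$ term in the recurrence to be smoothed out by the randomness of $\sigma$. I would attempt to formalize this by induction on $k$, establishing
\[
\mathbb{E}_\sigma\bigl[SC\bigl(SD_\sigma^{(k)}(\D),\D\bigr)\bigr] \;\le\; k\cdot SC(OPT(\D),\D),
\]
where $SD_\sigma^{(k)}$ denotes the partial allocation produced by SD on the first $k$ agents of $\sigma$; the inductive step exploits the symmetry of the uniform distribution on orderings to average away the doubling factor inherited from Theorem~\ref{thm:apx_SD}. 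Setting $k=n$ then delivers the claimed $n$-approximation.
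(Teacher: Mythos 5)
There is a genuine gap. Your setup (linearity of expectation, and the observation that the deterministic recursion $SD_k \le 2\,SD_{k-1} + OPT_k$ must somehow be tamed by the randomness) correctly identifies the difficulty, but the proposal stops exactly at the point where the key idea is needed. The route you suggest---averaging the recurrence inherited from Theorem~\ref{thm:apx_SD} over the uniform ordering---cannot work as stated: that recurrence holds pointwise for \emph{every} ordering $\sigma$, so taking expectations of both sides preserves the factor $2$ and still yields $2^k$. Symmetry of the distribution over orderings does not by itself cancel a multiplicative constant that is present in every realization; to get a linear bound you must change what the induction is over, not merely average the old inequality. Likewise, your stronger per-agent claim $\mathbb{E}_\sigma[w(P_i^\sigma)]\le SC(OPT(\D),\D)$ is asserted but never established, and the ``telescoping along the chain of blockers'' you invoke is precisely the mechanism that produces the exponential blow-up in the deterministic analysis.

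The paper's proof avoids this by conditioning on the \emph{identity of the first dictator} rather than on the position of a fixed agent. With probability $1/i$ agent $a_k$ chooses first and takes her globally shortest path $\pi_k$; the remaining agents then face a smaller instance $I_{-k}(\pi_k)$ with $a_k$ removed and the capacities along $\pi_k$ reduced, to which the inductive hypothesis (on the number of agents) applies, giving $RSD_{-k,\pi_k}\le (i-1)\,OPT_{-k,\pi_k}$. The new ingredient---absent from your proposal---is the lemma that under DoCP the optimum of the reduced instance barely degrades: $OPT_{-k,\pi_k}\le OPT_i + w(\pi_k)$, proved by rerouting a single blocked agent $a_{\bar j}$ along the alternative path $\Gamma^{k}_{\bar j}$ through $a_k$'s origin and destination. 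Combining this with $\sum_{k=1}^i w(\pi_k)\le OPT_i$ gives $RSD_i\le \frac1i OPT_i + (i-1)OPT_i + \frac{i-1}{i}OPT_i = i\cdot OPT_i$. Note that the induction is on instances with fewer agents (each with modified capacities), not on prefixes of a fixed ordering of a fixed instance as in your proposed intermediate claim; this restructuring is what replaces the doubling with an additive $OPT_i + w(\pi_k)$ term per step.
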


\iftoggle{proof_thm_8}{
\begin{proof}[Proof sketch]
We are going to prove the claim by induction on the number of agents. As above, let $OPT_i$ denote the cost of the optimal solution with paths assigned only to agents $a_j$, with $j \leq i$. With a slight abuse of notation we also let $OPT_i$ denote the solution itself. Similarly, $RSD_i$ denotes the expected cost of $RSD$ on input all the bids of agents $a_j$, $j \leq i$.
For the base of the induction with $i=1$, it is clear that $RSD_1$ is the optimal solution.
Now assume that the claim is true for $i-1$ and consider an instance with $i$ agents. 
	%
Let $I_{-k}(P)$, $P$ being a path from $O_k$ to $D_k$, be the instance of the problem without agent $a_k$ and with the capacity of the directed edges in $P$ diminished by one (i.e., as if the path $P$ were used by $a_k$).
Note that by the DoCP assumption, one of the agents $a_j$, with $j \neq k$, is guaranteed to be able to use the edges of $P$ in the opposite direction than $a_k$. 
We now let $OPT_{-k,P}$ and $RSD_{-k,P}$ be the cost of the optimum and expected cost of RSD on $I_{-k}(P)$, respectively. Moreover, let $\pi_j$ be the path minimizing the cost of agent $a_j$ (i.e., the path that $SD$ would assign to $a_j$ if she was the first to choose). We then have 
{\scriptsize
	\begin{align*}
	RSD_i &= \frac1i \sum_{k=1}^i \left(\rule{0ex}{3ex}w(\pi_k) + RSD_{-k,\pi_k}\right)
	 \leq \frac1i \sum_{k=1}^i \left(\rule{0ex}{3ex}w(\pi_k) + (i-1) OPT_{-k, \pi_k}\right)\\
	& \leq \frac1i \sum_{k=1}^i w(\pi_k) + \frac1i \sum_{k=1}^i \left(\rule{0ex}{3ex}(i-1) (OPT_i + w(\pi_k))\right)
	 \leq \frac1i OPT_i + (i-1) OPT_i + \frac{i-1}{i}OPT_i \\
	 & = i \cdot OPT_i	
	\end{align*}
}
\noindent
where the first equality follows from the definition of RSD, i.e., with probability $1/i$ each agent $k$ will have the first choice. As for the inequalities, we note that the first follows from the inductive hypothesis whilst the last from the observation that $OPT_i \geq \sum_{k=1}^i w(\pi_k)$. We are left with the second inequality. That is, we prove that under the DoCP
	$
	OPT_{-k, \pi_k} \leq OPT_i + w(\pi_k).
	$
If $OPT_i$ allocates $\pi_k$ to agent $a_k$ then we are done.
Otherwise, let $P_k$ be the path that $a_k$ gets in $OPT_i$ and note that the paths $P_j$ allocated to agents $a_j$ $j \neq k$ by $OPT_i$ saturates some of the edges of $P_k$; let $a_{\bar{j}}$ be one of these agents.
Consider now the solution $S$ to $I_{-k}(\pi_k)$ where all agents but $a_{\bar{j}}$ are allocated the same path as in $OPT_i$ and agent $a_{\bar{j}}$ is given, instead of $P_{\bar{j}}$, the alternative path $\Gamma_{\bar{j}}^{k}$ through agent $a_k$.
Observe that $\Gamma_{\bar{j}}^{k}$ uses the same directed edges of $P_{\bar{j}}$ and $P_k$ and the edges of $\pi_k$ in opposite direction and, as observed above, under the DoCP assumption, is a feasible path for $a_{\bar{j}}$ and $S$ a feasible solution to $I_k(\pi_k)$, whose social cost is denoted $SC(S)$. But then:
	\begin{align*}
	OPT_{-k, \pi_k} & \leq SC(S) = OPT_i - w(P_j) - w(P_k) + w(P)
	 \leq OPT_i + w(\pi_k)
	\end{align*}    
	where the last inequality follows from the fact that the edges in $P \setminus (P_k \cup P_j)$ are a subset of the edges in $\pi_k$.
\end{proof}
}

\begin{theorem}
\label{theorem:RSD_lowerbound}
The approximation ratio of RSD is $\Omega(n)$.
\end{theorem}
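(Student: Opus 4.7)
The plan is to exhibit a family of TAP instances, parameterised by $n$, on which the expected social cost of RSD exceeds the social optimum by a factor $\Omega(n)$, matching the upper bound of Theorem~\ref{theorem:RSD_upperbound}.

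The construction places all $n$ agents at a common source $O$ with distinct destinations $D_1,\ldots,D_n$ and introduces a single hub node $A$. The edges are: $(O,A)$ of capacity $n-1$ and cost $\varepsilon$; unit-capacity edges $(A,D_i)$ of cost $1-\varepsilon$ for every $i$; a direct edge $(O,D_1)$ of unit capacity and cost $1+\delta$; and, for each $i\ge 2$, a direct edge $(O,D_i)$ of unit capacity and cost $M$, where $M:=n^2$ and $\varepsilon,\delta$ are taken to be small. First I would exhibit the feasible assignment that routes $a_1$ along its direct edge and every $a_i$ with $i\ge 2$ through the hub; this has social cost $n+\delta$, so $\mathrm{OPT}\le n+\delta$.

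Next I would analyse RSD. The key observation is that as long as the bottleneck $(O,A)$ still has spare capacity, the current agent in the SD iteration strictly prefers the hub route (cost $1$) to her direct edge (cost $1+\delta$ or $M$). Since $(O,A)$ admits only $n-1$ units of flow, exactly the agent scheduled last in the RSD ordering is forced onto her direct edge. By symmetry of the uniform permutation, the identity of this last agent is uniform over $\{a_1,\ldots,a_n\}$, so
\[
\mathbb{E}[\mathrm{RSD}] \;=\; \tfrac{1}{n}(n+\delta) \;+\; \tfrac{n-1}{n}(n-1+M) \;=\; \Theta(n^2),
\]
and combined with the $\mathrm{OPT}$ bound this yields an approximation ratio $\Omega(n)$.

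The only non-routine step is the \emph{forcing} argument: one must verify that once the hub is saturated, the only surviving $O$-to-$D_k$ path for the last agent is her expensive direct edge. Keeping the instance minimal---no spoke-to-spoke detours, and with $D_i$ nodes having no outgoing edges---makes this immediate, since the only two topological $O$-$D_k$ paths in the graph are then hub-based or direct. The same minimality rules out alternative $\mathrm{OPT}$ allocations (e.g.\ sending two agents along direct edges) since any such allocation pays at least $M+n-1$, which exceeds $n+\delta$.
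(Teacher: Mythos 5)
Your proof is correct, and it takes a genuinely different route from the paper's. The paper reuses the chain instance of Figure~\ref{fig:sd_tight_instance}, places $1+2\cdot 3^{i-1}$ agents at node $v_i$, and lower-bounds the probability that a uniformly random order has the ``chain of levels'' property by a product of binomial ratios, obtaining an expected RSD cost of at least $(4/3)^{k-1}$ against an optimum of $1+\varepsilon n$. Your construction is instead a single-bottleneck star: the hub edge $(O,A)$ of capacity $n-1$ guarantees that \emph{exactly} the last agent of the random permutation is deflected onto her direct edge, and since that agent is uniform over $\{a_1,\ldots,a_n\}$ the expectation is computed in one line; with $M=n^2$ the ratio is essentially $n-1$, so your instance matches the upper bound of Theorem~\ref{theorem:RSD_upperbound} essentially exactly, which is a sharper conclusion than the paper's own calculation delivers. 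Two caveats worth making explicit. First, your forcing argument (like all of the paper's lower-bound proofs, e.g.\ Theorem~\ref{thm:deterministic_LB}) reads capacities statically, i.e.\ an edge of capacity $c$ carries at most $c$ agents in total; if the last agent could wait at $O$ via the zero-cost holdover edges of the time-expanded network, she would traverse the hub one step later and the bound would collapse, so you should state this convention. Second, neither your instance nor the paper's visibly satisfies the DoCP assumption under which the matching upper bound is proved (in your graph the alternative paths $\Gamma_i^j$ do not even exist, since the nodes $D_i$ have no outgoing edges); this is harmless for the theorem as stated, which carries no DoCP hypothesis, but it means the claimed tightness is with respect to unconditional instances.
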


\noindent
This means that by allowing randomness in the allocation mechanism, we can improve the exponential approximation ratio of the deterministic case to a linear one. 

\iftoggle{proof_thm_9_sketch}{
\begin{proof}[Proof sketch of Theorem~\ref{theorem:RSD_lowerbound}]
The proof uses the same construction as the instance of Figure \ref{fig:sd_tight_instance}, with $k<n$ nodes.
One agent is initially located at node $v_1$, whereas $1+2\cdot3^{i-1}$ agents are initially located at node $v_i$, for $i=2,\ldots, k-1$ .
With a little abuse of notation, let $|v_i|$ denote the number of agents initially located at node $v_i$, and let $n_i = \sum_{\ell=0}^i |v_\ell|$.
Edges $(v_1,D)$ and $(v_1,v_2)$ have capacity $1$, whereas edges $(v_i,D)$ and $(v_i,v_{i+1})$ have capacity $1+2\cdot3^{i-1}$ for $i>1$.
Let $a_1 \prec\ldots \prec a_n$ be an ordering over the agents.
We will be interested in orderings that possess the \emph{chain of levels} property, namely for all $i=1,2,\ldots, k-1$ at least one agent located at node $i$ appears after all agents of levels $0,1,\ldots i-1$.
The property of a chain of levels ordering with respect to the instance of Figure \ref{fig:sd_tight_instance} is that it forces at least one agent located at node $v_i$, for all $i = 1,\ldots, k-1$ to use the path $P = (v_i,v_{i+1},D)$, at a cost of $2^{i-1}$ for the agents, and an overall social cost of $\sum_{i=1}^{k-1}2^{i-1}=2^k-1 > 2^{k-1}$.

We argue that the probability that a chain of levels ordering is chosen by RSD is $\Pi_{i=1}^{k-1} \left(1-\frac{n_{i-1}}{n_i}\right)$.
Indeed, we can look at the process of randomly generating an ordering as follows. First an ordering for the agents located at each node is uniformly generated at random.
Then orderings of agents of consecutive nodes are merged together in lexicographic order.
In particular, we start merging the orderings of nodes $v_1$ and $v_2$.
There are $\binom{|v_2|+|v_1|}{v_1} = \binom{n_2}{n_1}$ such orderings, whereas there are $\binom{n_2-1}{n_1}$ orderings where one agent located at node $v_2$ follows all the agents located at node $v_1$. The partial ordering obtained so far is randomly merged with the ordering of agents at node $v_3$ and the procedure continues until the partial ordering is complete. When merging agents at node $v_i$ with the current partial ordering, we note that there are $\binom{n_i}{n_{i-1}}$ possible orderings, and $\binom{n_i-1}{n_{i-1}}$ where for all $\ell = 1,\ldots, i$, one agent located at node $v_\ell$ follows all the agents located at node $v_{\ell-1}$ in the ordering (i.e., fix one agent from node $v_\ell$ in the last position and compute all possible orderings of the other agents).
Hence, the probability of one agent at node $v_\ell$ appearing after all agents at node $v_{\ell-1}$ is $\binom{n_i-1}{n_{i-1}}/\binom{n_i}{n_{i-1}} = \left(1-\frac{n_{i-1}}{n_i}\right)$.
Since the random orderings generated at each stage are independent, the probability that for all $i = 1, 2, \ldots, k-1$ at least one agent at node $v_i$ appears after all agents at node $v_{i-1}$ in a random ordering is $\Pi_{i=1}^{k-1}\left(1-\frac{n_{i-1}}{n_i}\right)$.
Hence, the probability that a chain of levels ordering is chosen by RSD for the instance of Figure \ref{fig:sd_tight_instance} is $(2/3)^{k-1}$.

Finally, the expected cost of RSD is at least $(4/3)^{k-1} = n^{\log_3(4/3)}\approx n^{0.262}$. Since the optimal allocation costs $1+\epsilon \cdot n$, the approximation ratio is $\Omega(n)$ for $\epsilon$ close to $0$.
\end{proof}

}{
\iftoggle{proof_thm_9}{
\begin{proof}
The proof uses the same construction as the instance of Figure \ref{fig:sd_tight_instance}, with $k<n$ nodes.
One agent is initially located at node $v_1$, whereas $1+2\cdot3^{i-1}$ agents are initially located at node $v_i$, for $i=2,\ldots, k-1$ .
With a little abuse of notation, let $|v_i|$ denote the number of agents initially located at node $v_i$, and let $n_i = \sum_{\ell=0}^i |v_\ell|$.
Edges $(v_1,D)$ and $(v_1,v_2)$ have capacity $1$, whereas edges $(v_i,D)$ and $(v_i,v_{i+1})$ have capacity $1+2\cdot3^{i-1}$ for $i>1$.
Let $a_1 \prec\ldots \prec a_n$ be an ordering over the agents.
We will be interested in orderings that possess the \emph{chain of levels} property, namely for all $i=1,2,\ldots, k-1$ at least one agent located at node $i$ appears after all agents of levels $0,1,\ldots i-1$.
The property of a chain of levels ordering with respect to the instance of Figure \ref{fig:sd_tight_instance} is that it forces at least one agent located at node $v_i$, for all $i = 1,\ldots, k-1$ to use the path $P = (v_i,v_{i+1},D)$, at a cost of $2^{i-1}$ for the agents, and an overall social cost of $\sum_{i=1}^{k-1}2^{i-1}=2^k-1 > 2^{k-1}$.

We argue that the probability that a chain of levels ordering is chosen by RSD is $\Pi_{i=1}^{k-1} \left(1-\frac{n_{i-1}}{n_i}\right)$.
Indeed, we can look at the process of randomly generating an ordering as follows. First an ordering for the agents located at each node is uniformly generated at random.
Then orderings of agents of consecutive nodes are merged together in lexicographic order.
In particular, we start merging the orderings of nodes $v_1$ and $v_2$.
There are $\binom{|v_2|+|v_1|}{v_1} = \binom{n_2}{n_1}$ such orderings, whereas there are $\binom{n_2-1}{n_1}$ orderings where one agent located at node $v_2$ follows all the agents located at node $v_1$. The partial ordering obtained so far is randomly merged with the ordering of agents at node $v_3$ and the procedure continues until the partial ordering is complete. When merging agents at node $v_i$ with the current partial ordering, we note that there are $\binom{n_i}{n_{i-1}}$ possible orderings, and $\binom{n_i-1}{n_{i-1}}$ where for all $\ell = 1,\ldots, i$, one agent located at node $v_\ell$ follows all the agents located at node $v_{\ell-1}$ in the ordering (i.e., fix one agent from node $v_\ell$ in the last position and compute all possible orderings of the other agents).
Hence, the probability of one agent at node $v_\ell$ appearing after all agents at node $v_{\ell-1}$ is $\binom{n_i-1}{n_{i-1}}/\binom{n_i}{n_{i-1}} = \left(1-\frac{n_{i-1}}{n_i}\right)$.
Since the random orderings generated at each stage are independent, the probability that for all $i = 1, 2, \ldots, k-1$ at least one agent at node $v_i$ appears after all agents at node $v_{i-1}$ in a random ordering is $\Pi_{i=1}^{k-1}\left(1-\frac{n_{i-1}}{n_i}\right)$.
Hence, the probability that a chain of levels ordering is chosen by RSD for the instance of Figure \ref{fig:sd_tight_instance} is $(2/3)^{k-1}$.

Finally, the expected cost of RSD is at least $(4/3)^{k-1} = n^{\log_3(4/3)}\approx n^{0.262}$. Since the optimal allocation costs $1+\epsilon \cdot n$, the approximation ratio is $\Omega(n)$ for $\epsilon$ close to $0$.
\end{proof}
}
}

\iftoggle{proof_thm_10}{
\begin{proof}
We reduce the ordinal AIO problem studied in \cite{Filos-RatsikasF014} to \TAPStar{} through the same reduction as Theorem \ref{thm:reduction}.
Note that AIO problem in \cite{Filos-RatsikasF014} requires that the $|X| = |A|$ but this can be easily accommodated.
In \cite{Filos-RatsikasF014} the authors prove that no truthful in expectation mechanism for the AIO problem can achieve an approximation ratio lower than $\Omega{(\sqrt{n})}$. In virtue of the reduction from ordinal AIO to \TAPStar{}, this results holds for \TAPStar{} as well.
\end{proof}
}

\begin{figure*}
\begin{minipage}{0.49\linewidth}
\includegraphics[width=0.49\linewidth]{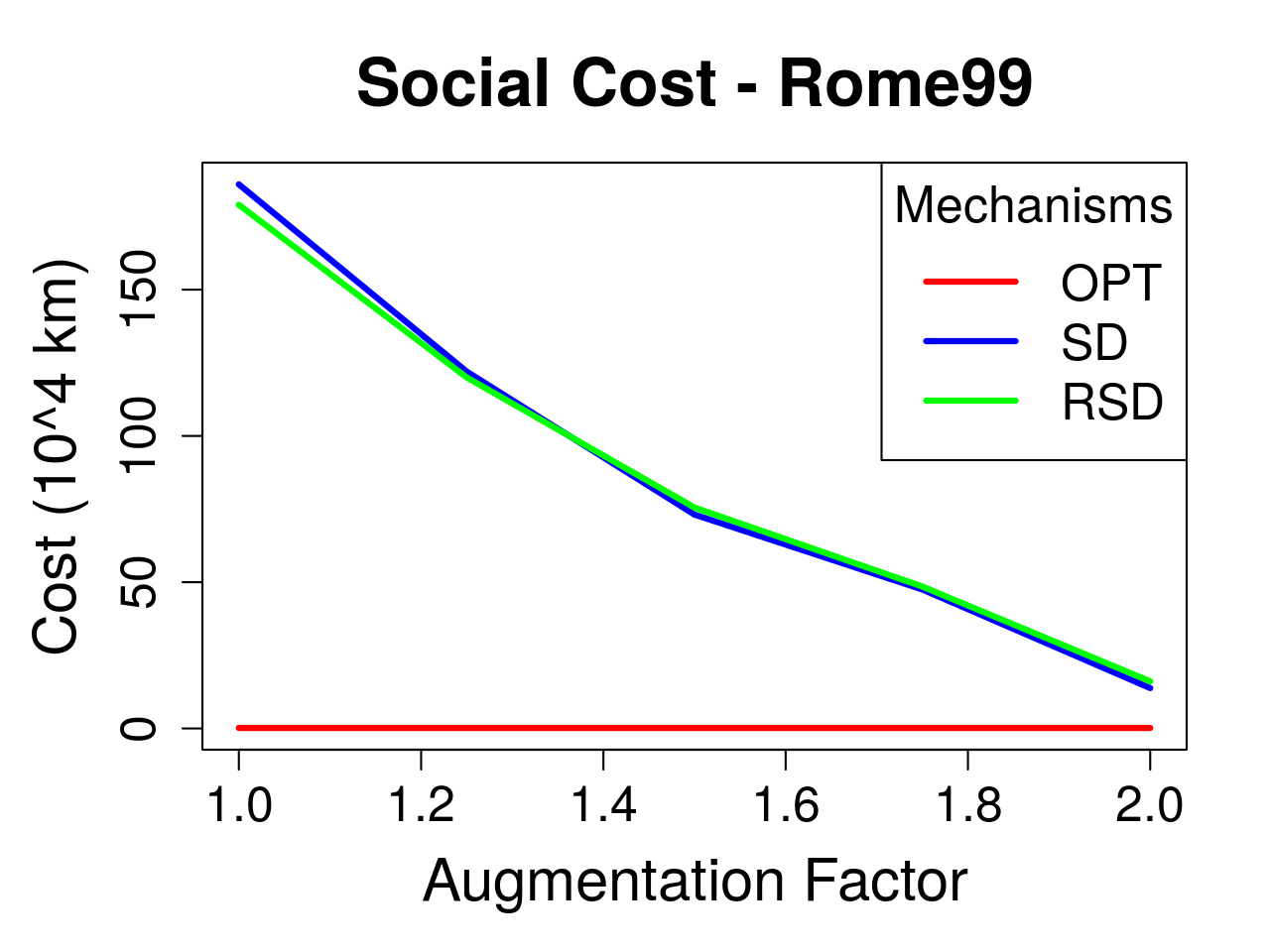}
\includegraphics[width=0.49\linewidth]{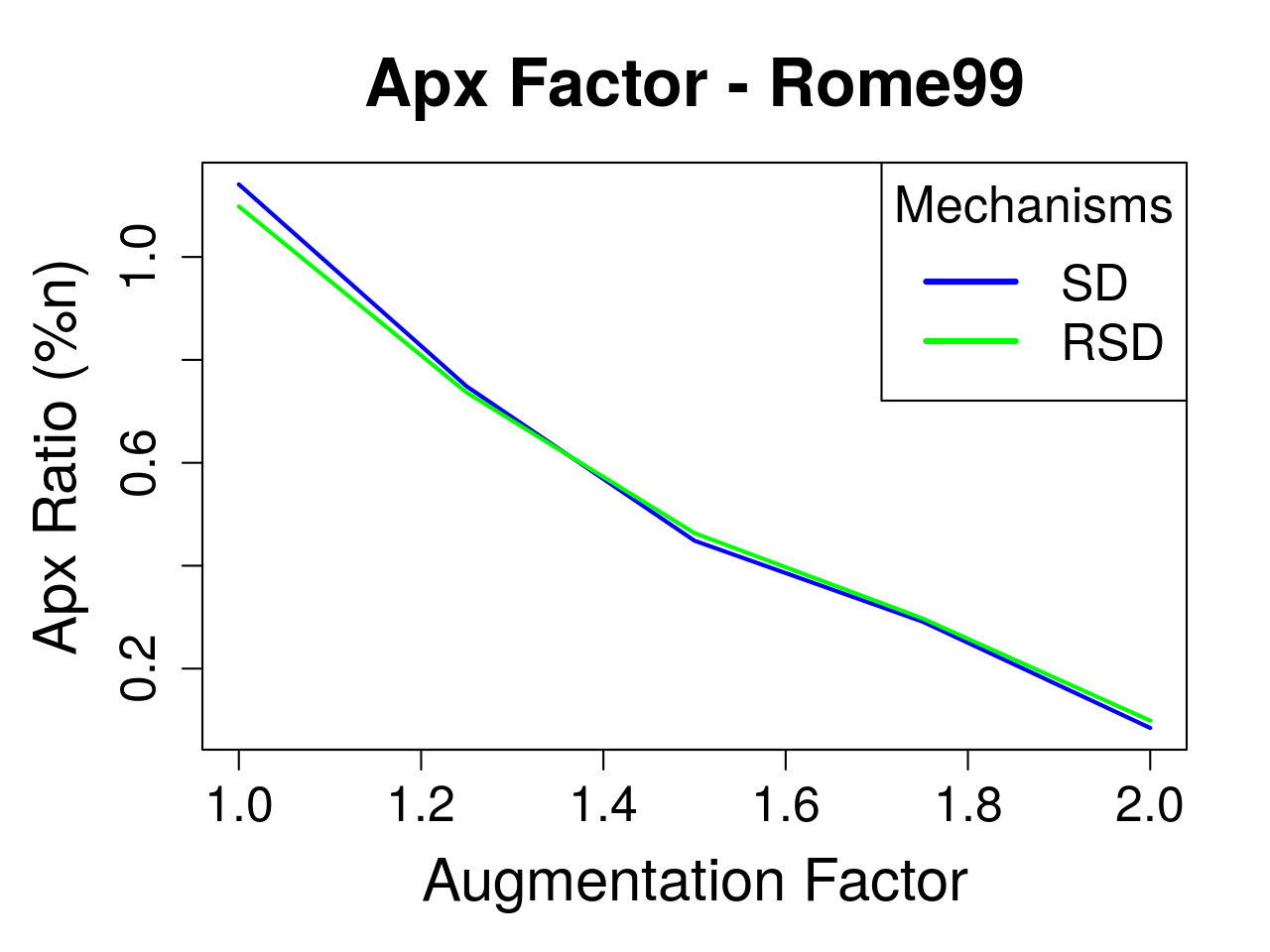}
\caption{Experimental results on Rome99}\label{fig:Rome_results}
\end{minipage}
\begin{minipage}{0.49\linewidth}~~~~~~
\begin{tabular}{l|l|l|l|l|}
\cline{2-5}
                                 & Rome-99 & NY-4000 & \multicolumn{2}{l|}{NY-10000} \\ \hline
\multicolumn{1}{|l|}{$|V|$}        & 3000 & 4000    & \multicolumn{2}{l|}{10000}    \\ \hline
\multicolumn{1}{|l|}{$|E|$}        & 8859 & 10027   & \multicolumn{2}{l|}{312594}   \\ \hline
\multicolumn{1}{|l|}{$\delta^+_{AVG}$} & 2.6  & 2.5     & \multicolumn{2}{l|}{31}       \\ \hline
\multicolumn{1}{|l|}{$c_{AVG}$}   & 27.3 & 20.5    & \multicolumn{2}{l|}{30}       \\ \hline                                                    
\end{tabular}
\caption{Structural characteristics of test graphs}\label{tab:graphs_stats}
\end{minipage}
\end{figure*}

\begin{figure*}[t!]
\begin{minipage}{0.49\linewidth}
\includegraphics[width=0.49\linewidth]{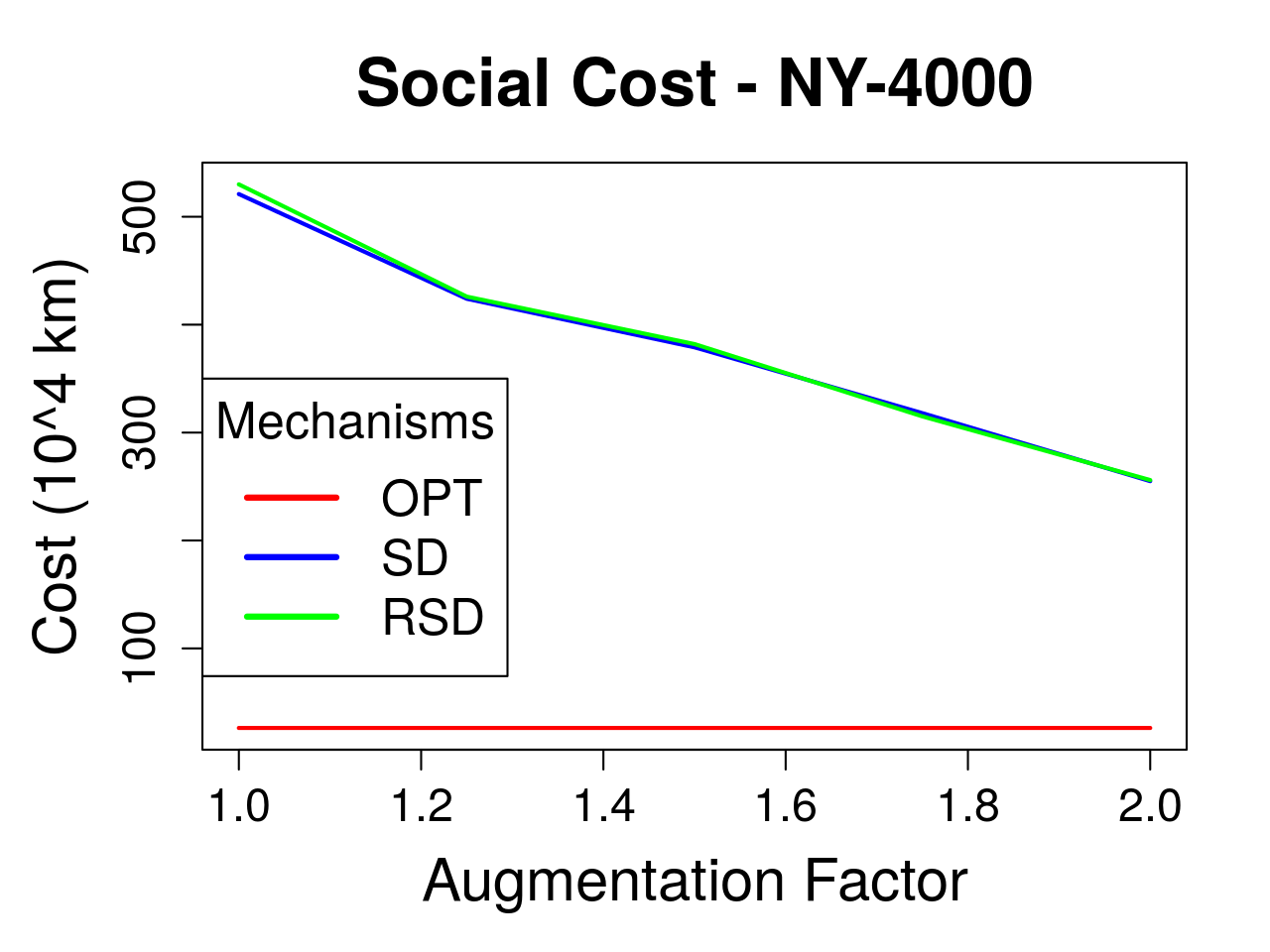}
\includegraphics[width=0.49\linewidth]{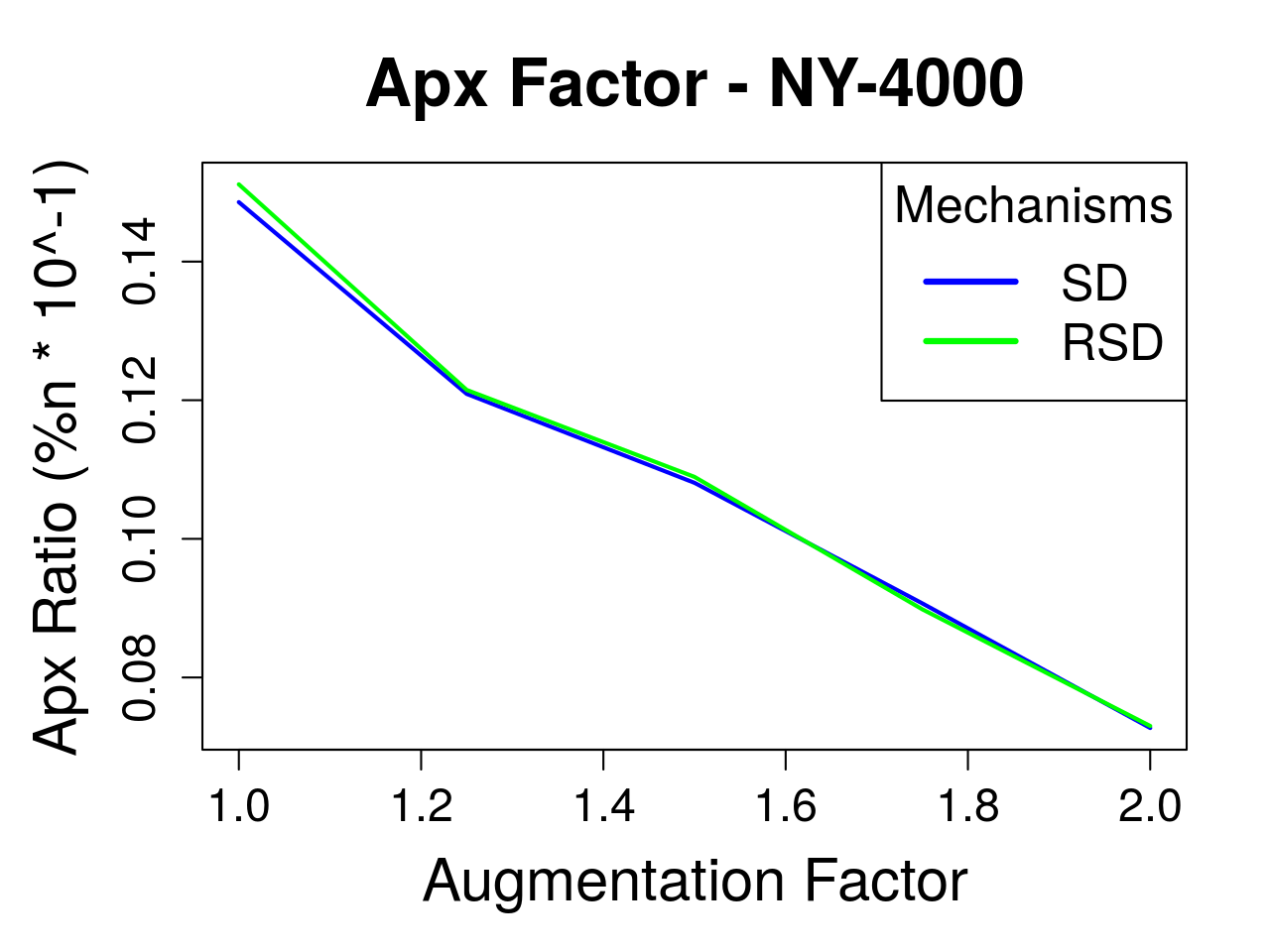}
\caption{Experimental results on NY-4000}\label{fig:NY-4000_results}
\end{minipage}
\begin{minipage}{0.49\linewidth}
\includegraphics[width=0.49\linewidth]{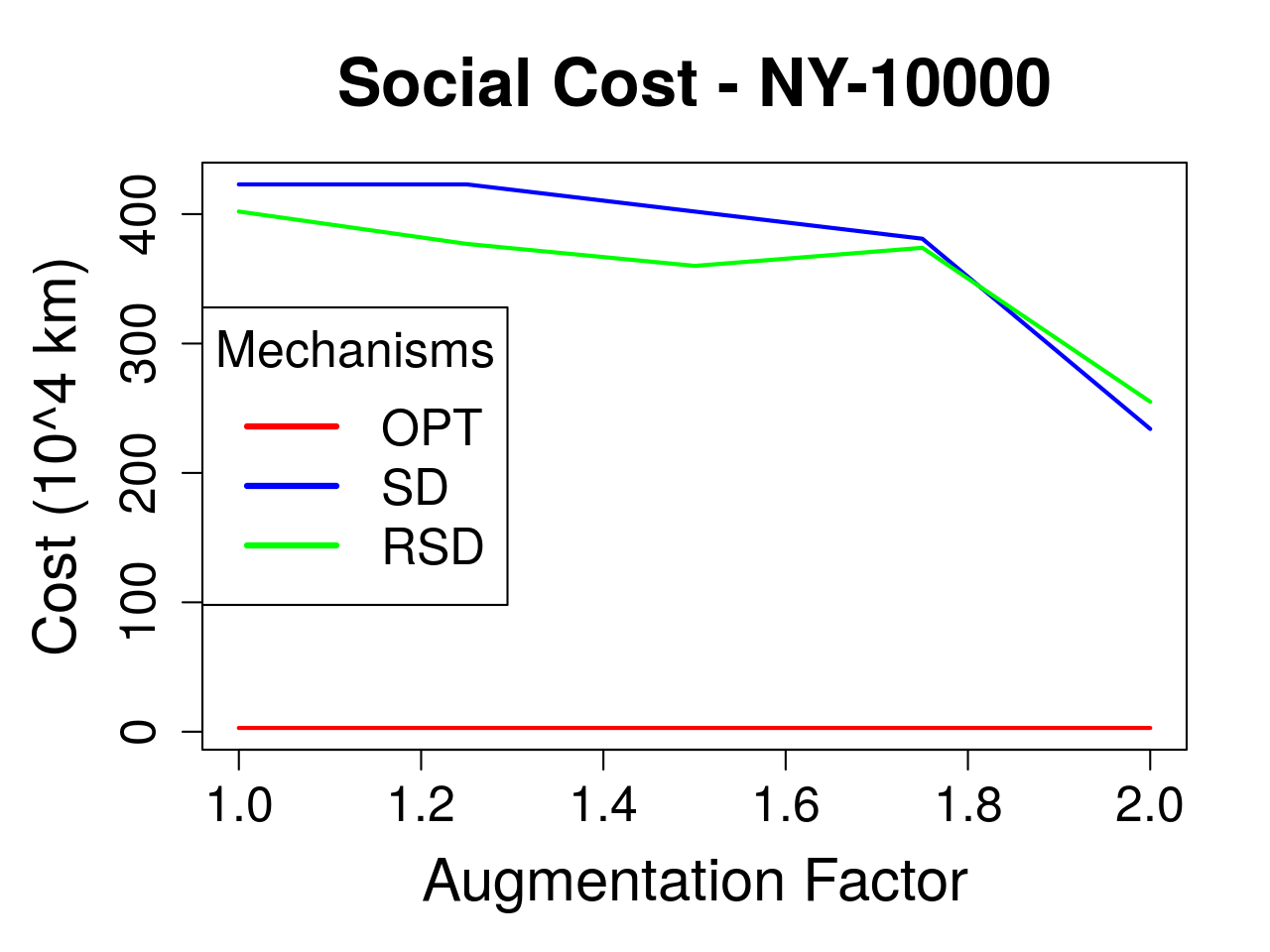}
\includegraphics[width=0.49\linewidth]{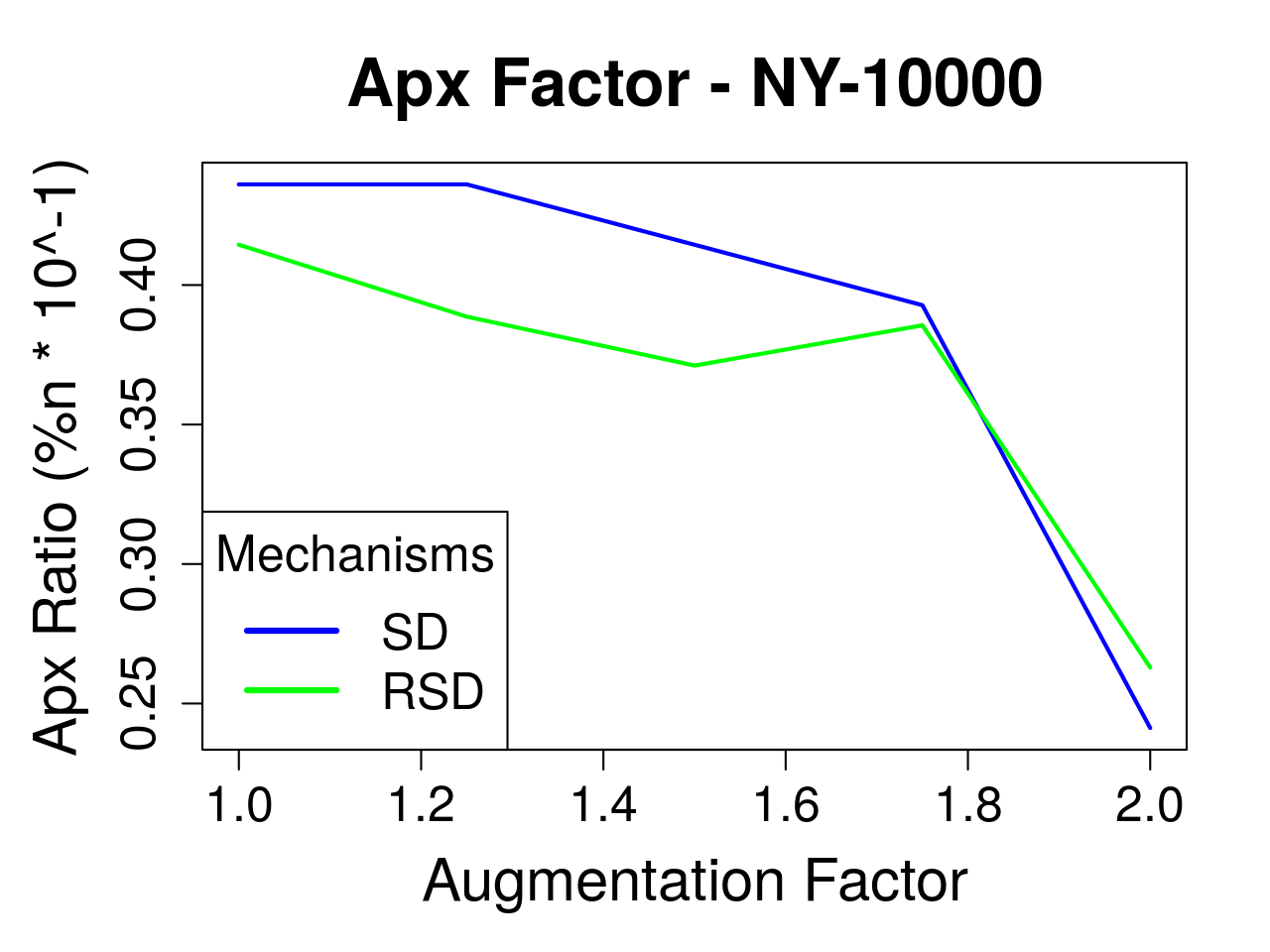}
\caption{Experimental results on NY-10000}\label{fig:NY-10000_results}
\end{minipage}
\vspace{-0.2cm}
\end{figure*}

\section{Experimental Results}\label{sec:experimental_results}
In this section we present the results of the experimental evaluation we conducted in order to assess 
whether the theoretical inapproximability lower bounds impose a high approximation cost on real-life instances. In short, we will show that they do not.
In particular, we have measured the approximation ratio obtained by SD and RSD on three real-life graphs extracted from the DIMCAS 99 shortest path implementation challenge benchmark datasets \cite{DIMACS}.
In particular, Rome99 represents a large portion of the directed road network of the city of Rome, Italy, from 1999. The graph contains 3353 vertices and 8870 edges. Vertices correspond to intersections between roads and edges correspond to roads or road segments. 
NY-4000 and NY-10000 are two subgraphs extracted from NY-d, a larger distance graph (with 264,346 nodes and 733,846 edges) representing a large portion the road network infrastrucutre of New York City, USA.
The two graphs were obtained by taking a subset, respectively, of the first 4000 and 10000 nodes of the graph while ensuring that the connectivity was preserved by adding edges representing paths through nodes of the original graph not included in the subgraph.
In Table \ref{tab:graphs_stats} some statistics related to the structural characteristics of our test graphs are reported, where $\delta^+_{AVG}$ represents the average outdegree of a node (i.e. the average number of edges originating from a node) and $c_{AVG}$ is the average capacity of the outgoing edges of a node.
In our experimental assessment, we studied the variation of the approximation ratio of SD and RSD on the test graphs while varying the \emph{resource augmentation factor}.
The resource augmentation factor is the key parameter of the resource augmentation framework \cite{resource_augmentation}, a novel comparison framework where a truthful mechanism that allocates ``scarce resources'' is evaluated by its worst-case performance on an instance where such ``scarce resources'' are augmented, against the optimal mechanism on the same instance with the original amount of resources.
In \cite{resource_augmentation} it is argued  that this is a fairer comparison framework than the traditional approximation ratio, which compares the performance of a mechanism that is severely limited by the requirement of truthfulness to that of an omnipotent mechanism that operates under no restrictions and has access to the real inputs of the agents.
An equivalent resource augmentation framework is often also used in the analysis of online algorithms.
In the TAP scenario, the natural resource to be augmented is the capacity of the existing edges, modelled by  the augmentation factor $\gamma$, which in our framework is defined as the factor by which the average capacity of the edges departing from a node is multiplied, spreading the excess capacity evenly among the outgoing edges of the node.
More formally, if $c_{AVG}(v)$ is the average capacity of node $v$, then the augmented average capacity $c^{\gamma}_{AVG}(v) = \gamma \cdot c_{AVG}$, and the capacity of each outgoing edge is set as $\frac{c_{AVG}(v)}{\delta^+(v)}$, where $\delta^+(v)$ is the outdegree of $v$.
In our experiments we ranged the augmentation factor $\gamma$ in the interval $[1,2]$, which means increasing the initial capacity until it is doubled.
To run our experiments, we generated three separate populations of agent-origin-destination triplets, one population for each test graph, each comprising a number of triplets roughly equal to $1/3$ of the nodes of the graph. The size of the population of triplets was empirically tailored to let the competition for popular links arise without making the allocation problem unfeasible. 
For each agent-origin-destination triplet in the population, both the origin and the destination were independently drawn uniformly at random from the set of the nodes of the graph, with replacement (i.e. the same node can be the origin/destination of multiple triplets).

Figures \ref{fig:Rome_results}, \ref{fig:NY-4000_results} and \ref{fig:NY-10000_results} show the results of our experimental analysis, respectively on graph Rome99, NY-4000 and NY-10000.
In particular, the left hand side plot represents the absolute value of the social cost for the optimal mechanism,
expressed in kilometers, for SD and for RSD, whereas the right hand side plot represents the approximation ratio for SD and RSD.
From our experimental analysis we can see that the actual approximation ratio of both SD and RSD is much lower than the predicted theoretical worst-case approximation.
In particular, our experiments show that the approximation ratios of SD and RSD are quite similar and strongly $o(n)$ on the investigated road networks.
This is due to the fact that such theoretical approximation lower bounds rely on pathological instances that are quite unlikely to occur in real life graphs.
It is also worth noting the beneficial effect that augmenting the capacity of existing roads has on the approximation ratio: increasing the augmentation factor steadily decreases the approximation ratio on both Rome99 and NY-4000.
On the other hand a marked decrease is noticeable only if we increase the augmentation factor to 1.8 in the case of NY-10000.
This phenomenon is  due to the already reach topological structure of NY-10000, which necessitates less augmentation to yield good performances.


\section{Conclusions}
\label{Sec:conclusions}
In this paper we investigate the problem of strategyproof traffic assignment without monetary incentives.
We study two SP mechanism for our problem, namely Serial Dictatorship and its randomized counterpart Random Serial Dictatorships.
For deterministic mechanisms we prove that Serial Dictatorship is $2^n-1$ under some mild assumptions, and characterize Bipolar Serial Dictatorship as the only SP, Pareto optimal and non-bossy deterministic mechanism for our problem.
In the randomized case, we prove that Random Serial Dictatorship is $n$-approximate.
Finally we assess the performance of Serial Dictatorship and Random Serial Dictatorship on real road network infrastructure, and show that they exhibit good approximation guarantees.
In particular, RSD is almost indistinguishable from SD, which means that the instances giving rise to the inapproximability results rarely occur in practice.

Note that our work is the first that addresses the problem of moneyless strategyproof traffic assignment. Although it ignores a number of properties that occur in real-world scenarios (e.g., dynamic network behavior, or asynchronous bid submissions), it still serves as a proof of concept for the existence of moneyless strategyproof assignment mechanisms.

\bibliographystyle{splncs04}
\bibliography{biblio}
\appendix 
\section{Appendix}
In the following we give the proofs of the theorems that were omitted in the main body of the paper due to space limitations.

\subsection{Time-Expanded Networks}
Let $G = (V, E)$ be a network with capacities $c$,
non-negative integral transit times $\tau$, and costs $w$ on the edges. 
For a given time horizon $T \in \mathbb{Z}>0$, the corresponding time-expanded network $G^T = (V^T, E^T )$ with capacities and costs on the edges
is defined as follows.
For each node $v \in V$ there are $T$ copies $v_0 , v_1 ,\ldots, v_{T -1}$, that is,
$V^{T} = \{v_t | v \in V, t \in \{0, 1, \ldots ,T - 1\}\}$ .
For each edge $e = (v, w) \in E$, there are $T - \tau(e)$ copies $e_0, e_1,\ldots , e_{T -1-\tau(e)}$ where edge $e_t$ connects
node $v_t$ to node $w_{t + \tau(e)}$ . Edge $e_t$ has capacity $c(e_t) = c(e)$  and cost $w(e_t) = w(e)$. 
Moreover, $E_T$ contains holdover edges $(v_t , v_{t+1} )$ for $v \in V$ and $t = 0,\ldots, T - 2$.
The capacity of holdover edges is infinite and they have zero cost.

\subsection{Omitted Theorems}
\begin{theorem}
Let $S=M(\D)$ be a traffic assignment and let $R_i(S)$ be the set of reaction available to $a_i$ at $S$.
If $S$ is Pareto optimal, then $M_i(\D)\in \arg\min_{P\in R_i(S)}w(P)$.
\end{theorem}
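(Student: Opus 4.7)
The plan is to prove the claim by contradiction and via a single-agent swap argument. Suppose that $S = M(\D)$ is Pareto optimal, yet $M_i(\D) \notin \arg\min_{P \in R_i(S)} w(P)$. Then there exists a reaction $P' \in R_i(S) = \mathcal{P}^i_{O_i,D_i}(S)$ with $w(P') < w(M_i(\D))$. From here I construct a candidate allocation $S'$ that coincides with $S$ on every agent except $i$, and on $i$ assigns the cheaper path $P'$; formally $S'_j = S_j$ for all $j \neq i$ and $S'_i = P'$. The goal is to show $S'$ is feasible and Pareto-dominates $S$, contradicting the assumed Pareto optimality of $S$.

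First I would verify feasibility of $S'$. Since $P' \in \mathcal{P}^i_{O_i,D_i}(S)$, every edge $e \in P'$ satisfies $c(e) > f^{-i}_S(e)$, and by construction $f^{-i}_{S'} = f^{-i}_S$ (the swap only touches agent $i$'s path). Therefore on every edge of $P'$ the new flow $f^{-i}_{S'}(e) + 1$ stays within capacity, and on edges of $M_i(\D) \setminus P'$ the swap only releases capacity. Hence $S'$ respects all capacity constraints and is a feasible traffic allocation.

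Next I would argue that $S'$ Pareto-dominates $S$. For each $j \neq i$, the assigned path is unchanged, and the residual capacity available to $j$ depends only on the flow of the other agents; that flow changes only on edges of $M_i(\D) \triangle P'$, and in every case the residual capacity seen by $j$ weakly increases, so $R_j(S) \subseteq R_j(S')$ and thus $\mathrm{cost}_j(S',D_j) \leq \mathrm{cost}_j(S,D_j)$. For agent $i$, the newly assigned path $P'$ is in $R_i(S')$ with $w(P') < w(M_i(\D))$, and identifying her realized cost with the cost of the path she actually uses yields a strict improvement. This contradicts the Pareto optimality of $S$, proving the theorem.

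The main subtlety — and the only place where one has to be careful — is the interpretation of $\mathrm{cost}_i$ in the Pareto-dominance comparison: the definition writes it as a minimum over $R_i(S)$, and since $R_i(S) = R_i(S')$, a naive reading would leave agent $i$'s cost unchanged. The right reading, consistent with the whole reaction model and implicit in the statement we are proving, is that following the assigned path is the default behaviour and Pareto dominance is tested path-by-path; the theorem itself certifies that this default is rational. Making this observation explicit is the key step, after which the feasibility and dominance checks are routine.
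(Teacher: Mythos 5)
Your proof is correct and follows essentially the same route as the paper's: assume the assigned path is not a minimum-cost reaction, swap agent $a_i$ onto the strictly cheaper path in $R_i(S)$ while leaving every other agent's path untouched, check feasibility from the definition of $\mathcal{P}^i_{O_i,D_i}(S)$, and conclude that $S$ is Pareto-dominated, a contradiction. Your closing observation about the literal reading of $cost_i$ as a minimum over $R_i(S)$ (under which the swap would leave agent $a_i$'s cost formally unchanged, since $R_i(S)=R_i(S')$) is a genuine subtlety that the paper's own proof silently glosses over, so making the path-by-path interpretation explicit strengthens rather than weakens your argument.
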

\begin{proof}
If $M_i(\D)\notin \arg\min_{P\in R_i(S)}w(P)$, then there must exist a reaction $r'_i\in R_i$ that is strictly better than $M_i(\D)$ for agent $a_i$, i.e. $c(r'_i)<c(M_i(\D))$.
Then, a route assignment $M'$ such that $M'_j = M_j$ for all $j\neq i$ and $M'_i = r_i$ is still feasible.  Since $cost_j(M')=cost_j(M')$ for all $j\neq i$, and $cost_i(M')<cost_i(M)$, $M$ is not Pareto optimal. 
\end{proof}
\begin{theorem}\label{thm:deterministic_LB}
There is no $\alpha$-approximate deterministic SP mechanism for the traffic assignment problem with $\alpha<3 - \varepsilon$, for any $\varepsilon>0$.
\end{theorem}
\begin{proof}
Given $\varepsilon>0$, consider the graph depicted in Figure \ref{fig:LB_instance_2}, where the labels on the edges represent their capacity (red) and length (black). The instance we consider has two agents $A= \{a_1,a_2\}$, both initially located at node $A$, whose intended destination is $D$ and $G$, respectively (namely, $\D=(D,G)$).
The length of the path $(B,C)$ is $K = \min\{2, \frac{10-4\varepsilon}{\varepsilon}\}$ and the length of path $(F,E)$ is $K-1$. 
\begin{figure}[h]
\centering
\includegraphics[width=.55\linewidth]{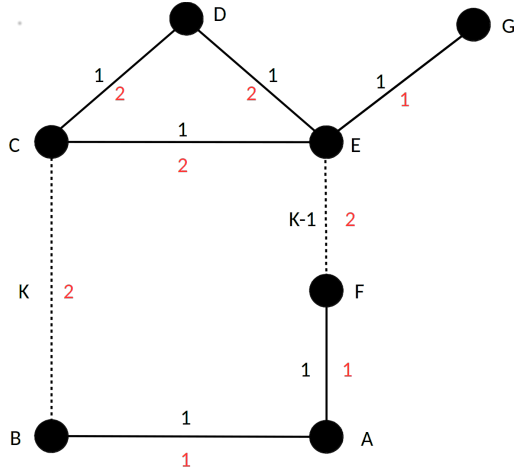}
\caption{Lower bound instance}\label{fig:LB_instance_2}
\end{figure}
Let us consider a generic $\alpha$-approximate mechanism $M$.
Assume by contradiction that $M$ is strategyproof and $\alpha$-approximate with $\alpha<3-\varepsilon$. 
The instance has two Pareto optimal solutions, depending on which player is allocated the edge $(A,F)$ (note that only one agent at a time can use edge $(A,F)$ as its capacity is 1). The optimal allocation $\mathbf{P}^*=OPT(\D)$ is $P^*_1 = (A,B,C,D)$ and $P^*_2 = (A,F,E,G)$, $cost_1(\mathbf{P}^*,D) = K+2$ and $cost_2(\mathbf{P}^*,G) = K+1$, and $SC(\mathbf{P}^*,\D)=2K+3$. The second best solution is $P_1 = (A,F,E,D)$ and $P_2 = (A,B,C,E,G)$. We note that $cost_1(P_1,D_1)=K+1$ and $cost_2(P_2,D_2)=K+3$, for a social cost of $SC(M(\D),\D) = 2K+4$. 
We are going to prove that, regardless of the solution $M$ returns on this instance, there is another instance where to maintain SP, $M$ achieves an approximation not better than $3-\varepsilon$, a contradiction.
Let us assume first that $M$ returns the optimal allocation.
If agent $a_1$ declares $D'_1=F$ instead of her true type, by SP, $M$ cannot allocate the edge $(A, F)$ to $a_1$. In fact, assume for the sake of contradiction that $M(D_1', D_2)$ allocates $(A,F)$ to $a_1$.
Then $a_2$ is allocated path $(A,B,C,E,G)$ and $a_1$ can use path $(A,F,E,D)$ and reach her true destination, thus having: 
 $$cost_1(M(D'_1,D_2),D_1)=K+1<cost_1(M(\D),D_1)=K+2.$$ 
Therefore, $M(D_1', D_2)$ must return $P'_1 = (A,B,C,E,F)$ and $P'_2 = (A,F,E,G)$, with $SC((P'_1,P'_2), (D_1', D_2)) = 3K+2$, whilst the optimum for this instance is $OPT_1(D'_1,D_2) = (A,F)$ and $OPT_2(D'_1,D_2) =(A,B,C,E,G)$. The social cost of the optimum is then $SC(OPT(D'_1,D_2),(D'_1,D_2)) = K+4$.
Therefore $M$ has an approximation ratio higher than $3-\varepsilon$.
Let us now suppose that $M(\D)$ returns $M_1(\D) = P_1$ and $M_2(\D) = P_2$.
In this case, consider the case that agent $a_2$ reports $D'_2=F$ instead of her true type.
The optimal allocation is $OPT_1(D_1,D_2') = (A,B,C,D)$ and $OPT_2(D_1,D_2') = (A,F)$, and $SC(OPT(D'_2,D_1),(D'_2,D_1)) = K+3$.
As before, this allocation is not strategyproof as:
\begin{eqnarray*}
cost_2(OPT_2(D_1, D'_2),D_2)= K+1 \\ 
< cost_2(M_2(D_1,D'_2),D_2)=K+3 
\end{eqnarray*}
(i.e., agent $a_2$ can use the route $(A,F,E,G)$ to reach her true destination).
As above, one can easily check that in this case the best (in terms of approximation ratio) strategyproof allocation is $P'_1=(A,F,E,D)$ and $P'_2=(A,B,C,E,F)$, with a cost of $SC((P'_1,P'_2),(D_1,D_2')) = 3K+2$.
This solution has an approximation ratio higher than $3-\varepsilon$.
\end{proof}
\begin{theorem}
If $G$ is $K$-edge-connected\footnote{A graph is $K$-edge-connected if removing at most $K$ edges from it does not disconnect the graph.}, mechanism Serial Dictator is feasible for $K$ agents.
\end{theorem}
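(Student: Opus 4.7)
My plan is to prove feasibility by induction on $i \in \{1,\ldots,K\}$, showing that when SD processes agent $a_i$ the residual graph $G_f^{-\{a_1,\ldots,a_{i-1}\}}$ still contains a path of positive residual capacity from $O_i$ to $D_i$. The base case $i=1$ is immediate because the residual graph equals $G$ and $K$-edge-connectedness trivially implies connectedness (so any shortest $O_1$-$D_1$ path is feasible for $a_1$). The key structural tool for the inductive step will be Menger's theorem: in a $K$-edge-connected graph there exist $K$ pairwise edge-disjoint paths between any two vertices, or dually, every cut separating them has size at least $K$.

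For the inductive step I would fix $i \leq K$ and apply Menger's theorem to $O_i$ and $D_i$ to produce $K$ pairwise edge-disjoint paths $P_1,\ldots,P_K$ in $G$. The goal is then a pigeonhole-style claim: the $i-1 \leq K-1$ previous SD allocations cannot jointly saturate all $K$ of the $P_\ell$'s, so at least one remains entirely available in the residual graph, and SD will find a shortest feasible path for $a_i$. A cleaner and equivalent way to phrase the argument is through the min-cut dual: $K$-edge-connectedness gives every $O_i$-$D_i$ cut a capacity of at least $K$ in $G$, and I would show that after the $i-1 \leq K-1$ previous unit-demand routings the residual min $O_i$-$D_i$ cut still has size at least $K-(K-1)=1$, so $O_i$ and $D_i$ remain connected in the residual graph by max-flow/min-cut.

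The main obstacle is making the pigeonhole step rigorous, because a single previously routed simple path $Q_j$ from $O_j$ to $D_j$ may a priori share edges with several of the $P_\ell$'s or cross a given $O_i$-$D_i$ cut multiple times, so the naive one-route-blocks-one-path count does not go through. To handle this I would recast each previous allocation as a unit of integer $O_j$-$D_j$ flow and invoke the flow decomposition theorem to replace it, without increasing consumed capacity, by a canonical choice whose contribution to any prescribed $O_i$-$D_i$ cut is at most one unit. Summing this bound over the $i-1 \leq K-1$ previous allocations then yields the required lower bound on the residual min cut and closes the induction.
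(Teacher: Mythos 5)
Your starting point---Menger's theorem supplying $K$ pairwise edge-disjoint $O_i$--$D_i$ paths, followed by a pigeonhole over the $i-1\le K-1$ previously routed agents---is, in substance, the paper's entire (one-line) argument, and you are right to distrust the naive count: a single previously assigned simple path can share edges with several of the $K$ disjoint paths, and, in the cut formulation, can cross a fixed $O_i$--$D_i$ cut more than once in the forward direction, so neither ``one route blocks at most one of the $P_\ell$'' nor ``one route consumes at most one unit of the cut'' holds as stated.

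The patch you propose in your last paragraph does not close this gap, for two reasons. First, the residual graph that agent $a_i$ actually faces is determined by the paths SD \emph{really} assigned, namely the minimum-cost paths in the successive residual graphs; replacing a previously assigned path $Q_j$ by a canonical $Q_j'$ changes which edges are consumed, so any capacity bound you derive for the replaced assignment says nothing about the residual capacities left by the true one, and you are not free to redefine SD so that it routes along your canonical choices. Second, even as a pure existence claim the canonicalization fails: flow decomposition only strips cycles from a unit $O_j$--$D_j$ flow, which controls the \emph{net} number of crossings of a prescribed cut but not the \emph{gross} forward crossings that deplete its capacity, and in a directed network every simple $O_j$--$D_j$ path may be forced to cross a given $O_i$--$D_i$ cut several times. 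Hence the inductive step is not established. For comparison, the paper's proof simply asserts feasibility from the existence of $K$ edge-disjoint paths and does not engage with this subtlety at all; you have correctly isolated the real difficulty---that this is an integral multicommodity routing claim not governed by single-commodity cut conditions---but the argument you sketch does not resolve it.
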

\begin{proof}
If the graph is $K$-edge connected, the allocation returned by the Serial Dictator will always be feasible, (i.e. paths assigned to different agents will not overlap and there is always an assignable path for each agents).
This follows from the fact that in a $K$-edge-connected graph there are at least $K$ edge disjoint paths between any pair of nodes.
\end{proof}

\setcounter{theorem}{4}

\begin{theorem}\label{thm:tightness_SD}
Under the DoCP assumption, the bound of Theorem 4 is tight.
\end{theorem}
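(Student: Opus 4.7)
The plan is to construct a family of TAP instances, parameterized by $n$, on which SD incurs a social cost approaching $2^n-1$ times the optimum. The natural template is a chain of $n$ agents, one per source node, all heading to a common destination $D$, where each agent faces a choice between a cheap direct edge and a more expensive ``chain'' edge that routes through the next agent's source node. By tuning the chain edge costs to double at each step and capping all capacities at one, I will force any prefix-greedy ordering into a telescoping sum of powers of two.

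More concretely, I would take nodes $v_1,\ldots,v_n,D$ with one agent $a_i$ located at $v_i$ and all destinations equal to $D$. Edge costs would be set so that $a_1$ has a direct edge $(v_1,D)$ of cost $1+\varepsilon$ and a chain edge $(v_1,v_2)$ of cost $1$; for $i=2,\ldots,n-1$, agent $a_i$ has a direct edge $(v_i,D)$ of cost $\varepsilon$ and a chain edge $(v_i,v_{i+1})$ of cost $2^{i-1}$; and $a_n$ has two direct edges to $D$, of costs $\varepsilon$ and $2^{n-1}$. All capacities are set to $1$. The optimum assigns $a_1$ her $(1+\varepsilon)$-path and each remaining $a_i$ her $\varepsilon$-path, giving social cost $1+n\varepsilon$. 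Running SD under the ordering $a_1\prec a_2\prec\cdots\prec a_n$, agent $a_1$ greedily picks the cost-$1$ chain edge, saturating the link into $v_2$ and thereby blocking $a_2$'s direct path (since the only remaining access to $D$ from $v_2$ in the residual graph is the chain); by induction each $a_i$ is pushed onto her cost-$2^{i-1}$ chain edge. The resulting total is $\sum_{i=0}^{n-1} 2^i = 2^n-1$, so letting $\varepsilon\to 0$ drives the approximation ratio to $2^n-1$.

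The key subtlety I expect to have to handle carefully is twofold. First, I need to wire the graph so that taking the chain edge $(v_i,v_{i+1})$ genuinely consumes the capacity $a_{i+1}$ would need for her direct path (otherwise $a_{i+1}$'s cheap direct route is not blocked and the cascade fails); this is achieved by ensuring that from $v_{i+1}$ the only access to $D$ goes through a shared capacity-1 link, or equivalently by routing the ``direct'' options through a common bottleneck. Second, and more importantly, I must verify that the constructed instance satisfies the DoCP assumption, since Theorem \ref{thm:apx_SD} is stated only under DoCP and a tightness claim would be vacuous otherwise; this amounts to checking that whenever $a_j$ blocks $a_i$, the alternative path $\Gamma_i^j$ through $O_j$ retains enough residual capacity to accommodate all agents still to be processed. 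In the chain topology above this check is essentially immediate because the blocked agent's alternative collapses to using her own saturated edge in reverse plus a direct hop that is used by no other agent.

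Finally, I would conclude by computing the ratio $\mathrm{SD}/\mathrm{OPT} = (2^n-1)/(1+n\varepsilon)$ and observing that this tends to $2^n-1$ as $\varepsilon\to 0$, matching the upper bound of Theorem \ref{thm:apx_SD} and thus establishing tightness. If the DoCP verification turns out to be the sticking point, I can relax the chain-edge cost of $2^{i-1}$ downward by a small amount or add redundant high-capacity alternative links so that DoCP holds while preserving the asymptotic ratio.
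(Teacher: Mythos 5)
Your construction is essentially identical to the paper's own proof: the same chain of $n$ unit-capacity instances in which $a_1$ chooses between a direct edge of cost $1+\varepsilon$ and a cost-$1$ edge through $v_2$, each subsequent $a_i$ between a cost-$\varepsilon$ direct edge and a cost-$2^{i-1}$ chain edge, yielding $\mathrm{OPT}=1+n\varepsilon$ versus $\mathrm{SD}=2^n-1$ under the ordering $a_1\prec\cdots\prec a_n$. Your additional care about wiring the blocking cascade and checking DoCP goes slightly beyond what the paper writes down, but the argument is the same.
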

\begin{proof}
Let us consider the instance in Figure \ref{fig:sd_tight_instance}, where there are $n$ nodes $v_1,\ldots, v_n$ and $n$ agents $A=\{a_1,\ldots, a_n\}$ such that agent $a_i$ is initially located at node $v_i$. All agents want to reach the same destination $D$.
Each link has capacity $1$.
Agent $a_1$ has two paths to her destination $D$: one direct path that costs $1+\varepsilon$ (where $\varepsilon\ll 1$ is a small constant) and a path costing $1$ that goes through the node agent $a_2$ is initially located on.
Each agent $a_i$, for $i=2,\dots,n-1$ has two paths: one direct path that costs $\varepsilon$ and a path costing $2^{i-1}$ that goes through the node  agent $a_{i+1}$ is initially located on.
Agent $a_n$ has two direct paths, costing $\varepsilon$ and $2^{n-1}$ respectively.
The optimal traffic assignment assigns agent $a_1$ to the path that costs $1+\varepsilon$ and the other agents to the path costing $\varepsilon$, and has a cost of $1+\varepsilon n$.
Let us consider ordering $a_1\prec a_2 \prec \ldots \prec a_n$.
On this ordering, mechanism SD assigns agent $a_1$ the path costing $1$, and to each agent $a_i$, for $i=2,\ldots, a_n$ the path costing $2^{i-1}$ for a total cost of $\sum_{i=0}^{n-1} 2^i = 2^n-1$.
For $\varepsilon$ close to $0$, the approximation ratio of SD on the instance depicted in Figure \ref{fig:sd_tight_instance} is hence close to $2^n-1$.
\begin{figure}[h]
\centering
\includegraphics[width=.6\linewidth]{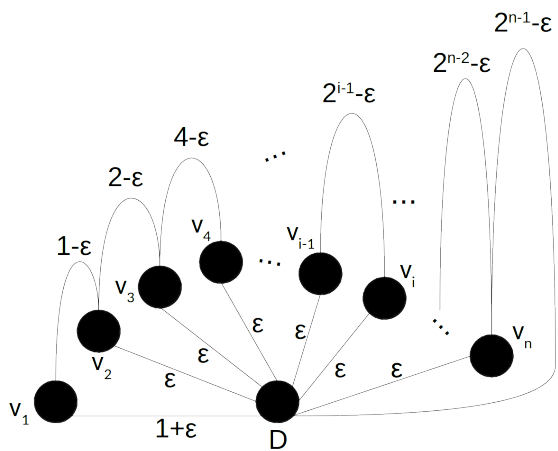}
\caption{Tight instance for SD}\label{fig:sd_tight_instance}
\end{figure}
\end{proof}
\setcounter{theorem}{5}

\begin{theorem}\label{thm:reduction}
A traffic allocation mechanism for \TAPStar is Pareto-optimal, SP and non-bossy if and only if it is a Bi-polar Serially Dictatorial Rule.
\end{theorem}
\begin{proof}
We will reduce an instance of the problem of \emph{assigning indivisible objects} with general ordinal preferences 
\cite{DBLP:journals/jet/BogomolnaiaDE05} (AIO for short) to \TAPStar{}.
An instance of AIO is composed of a set of objects $X = \{x_1,\ldots ,x_m\}$ that have to be assigned to a set of agents $A = \{a_1,\ldots ,a_n\}$, such that every agent receives at most one object and no agent is left without an object if there are objects still available.
Agents have ordinal general preferences $\succeq _i$, where $x \succeq_i y$ for $x,y\in X$ means that agent $i$ (weakly) prefers object $x$ to object $y$.
From an instance of AIO, we can build an instance of \TAPStar{} as follows. 
\TAPStar{} has the same set of agents $A$ as AIO.
Graph $G$ of \TAPStar{} has a node $O$ such that $O_i=O$ for all $a_i \in A$.
For every object $x_j\in X$ we construct in $G$ a node $v_j$ and an edge $(O,v_j)$ such that $c(O,v_j)=1$ and $w(O,v_j) = \varepsilon$ for $0<\varepsilon\ll 1$.
Let $\Psi$ be the set of all possible preference relations over $X$. 
We construct\footnote{We note that, although $|\Psi|$ can be exponential in $m$, it is always finite. We remark that graphs of exponential size are not an issue here since the characterization we are proving in this theorem does not rely on computational efficiency.} $|\Psi|$ destination nodes $D_k$, one for each preference relation $\succeq \in \Psi$ and for each $k\in{1,\ldots, |\Psi|}$.
For each $j \in \{1,\ldots, m \}$ we add an edge $(v_j,D_k)$ having capacity 1 and weight $w(v_j,D_k)$ equal to the \emph{ranking}\footnote{
The alternatives in $X$ can be partitioned in subsets ${X_1,\ldots, X_\ell,\ldots}$ such that any two elements $x_1,x_2\in X_\ell$ are indifferent according to $\succeq$ and, for any $x_1\in X_\ell$ and $x_2 \in X_{\ell+1}$, $x_1$ is strictly preferred to $x_2$ according to $\succeq$. Then $\ell$ is the ranking of $x\in X_\ell$.} of $x_j$ according to $\succeq$.
Figure \ref{fig:reduction_example} gives an example of the reduction applied to an AIO game with $A=\{a_1,a_2,a_3\}$, $X=\{x_1,x_2,x_3\}$ and $\Psi$ being the set of all possible \emph{linear orderings} over $X$. The labels on the edges of the graph of Figure \ref{fig:reduction_example} represent the costs of the edges, whereas all capacities are set to 1.
\begin{table}[]
\centering
\caption{Mapping elements of $\Psi$ to destination nodes of $G$}
\label{my-label}
\begin{tabular}{|l|l|}
\hline
$D_1$ & $x_1\prec x_2\prec x_3$ \\ \hline
$D_2$ & $x_1\prec x_3\prec x_2$ \\ \hline
$D_3$ & $x_2\prec x_1\prec x_3$ \\ \hline
$D_4$ & $x_2\prec x_3\prec x_1$ \\ \hline
$D_5$ & $x_3\prec x_1\prec x_2$ \\ \hline
$D_6$ & $x_3\prec x_2\prec x_1$ \\ \hline
\end{tabular}
\end{table} 

By construction, the following hold: (\emph{i}) any path allocation on $G$ must include all the edges $(O,v_j)$; (\emph{ii}) any edge $(O,v_j)$ is used by at most one path; and (\emph{iii}) only one agent can be assigned any given edge $(O,v_j)$ due to the capacity constraint.
We can now easily transform a path allocation for the so-constructed \TAPStar{} problem to an allocation of objects to agents in the AIO problem, and vice versa.
Indeed, let $P_i$ be the path assigned to agent $a_i$ in the \TAPStar{} problem. If $P_i$ contains edge $(O,v_j)$ we allocate object $x_j$ to agent $a_i$ in the AIO instance, and vice versa from an allocation for the AOI problem to an allocation for the \TAPStar{} problem.
In \cite{DBLP:journals/jet/BogomolnaiaDE05} it is proved that BSD is the only Pareto optimal, SP and non-bossy mechanism for AIO.
This characterization trivially transfers to \TAPStar{} due to the reduction sketched above.
Indeed, let us suppose that there exists an SP, Pareto optimal and non-bossy algorithm for TAP.
Such algorithm would be SP, Pareto optimal and non-bossy for the AIO instance as well.
\begin{figure}[h]
\centering
\includegraphics[width=.6\linewidth]{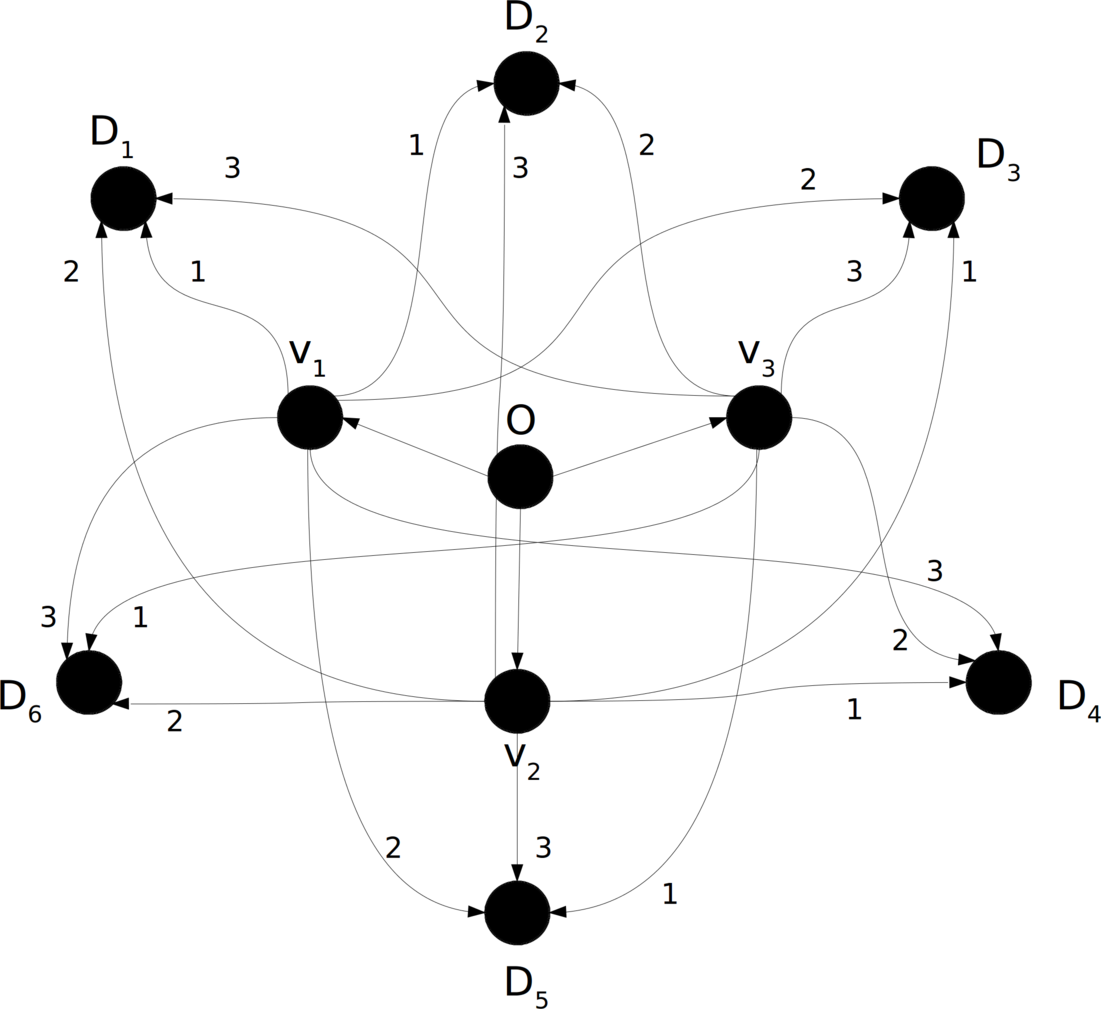}
\caption{Reduction example}\label{fig:reduction_example}
\end{figure}
\end{proof}

\begin{lemma}
BSD cannot achieve an approximation ratio lower than $\Omega(2^n)$ for TAP.
\end{lemma}
\begin{proof}
We are going to show an instance of TAP where BSD has an approximation ratio of $\Omega(2^n)$.
Let us take the instance of Figure \ref{fig:sd_tight_instance} and let us consider the ordering $\{a_1,a_2\}\prec a_3\prec \ldots \prec a_n$.
Let us consider $X_1 = \{v_2,D\}$ and $X_2 = E\setminus X_1$.
The so-defined BSD mechanism, on input the instance of figure \ref{fig:sd_tight_instance} would always execute SD with ordering $a_1 \prec a_2 \prec \ldots, a_n$. We know from Theorem \ref{thm:tightness_SD} that under this ordering the approximation ratio of SD is $\Omega(2^n)$.
\end{proof}

\begin{theorem}\label{thm:randomizedLB}
There is no $\alpha$-approximate randomized universally truthful mechanism for the traffic assignment problem with $\alpha<11/10$.
\end{theorem}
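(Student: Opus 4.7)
The plan is to invoke Yao's minimax principle: to lower-bound the approximation ratio of the best universally truthful randomized mechanism, it suffices to exhibit a probability distribution over inputs such that every deterministic strategyproof mechanism has expected approximation ratio at least $11/10$ against that distribution. Since a universally truthful mechanism is by definition a distribution over deterministic SP mechanisms, any universal lower bound obtained this way transfers immediately.

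As the adversarial distribution I would reuse the two-instance construction from the proof of Theorem \ref{thm:deterministic_LB}, specializing $K=2$. Let $I$ be the instance with bids $\D = (D, G)$ and let $I'$ be the modified instance where $a_1$ instead bids $F$ (while $a_2$ still bids $G$). I would place probability $\lambda = 2/3$ on $I$ and $1-\lambda = 1/3$ on $I'$. Using the optimal assignments already computed in Theorem \ref{thm:deterministic_LB} (namely $SC(\text{OPT}) = 2K+3$ on $I$ and $SC(\text{OPT}) = K+4$ on $I'$), a direct computation gives an expected optimum of $(\lambda + 1)K + 4 - \lambda = 20/3$.

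The key step is to argue that any deterministic SP mechanism $M$ must allocate the bottleneck edge $(A,F)$ to the same agent in both $I$ and $I'$. This is exactly the constraint already isolated in the proof of Theorem \ref{thm:deterministic_LB}: if $M$ gave $(A,F)$ to $a_1$ in $I$ but not in $I'$ (or vice versa), an agent whose true destination is $D$ or $G$ would strictly improve her cost by misreporting, using the swapped allocation on the residual graph as the deviation path. Granted this, only two cases remain, and in each the expressions for $M$'s social cost on $I$ and $I'$ can be read off from Theorem \ref{thm:deterministic_LB}. If $(A,F)$ always goes to $a_1$, the expected cost is $(\lambda+1)K + 4 = 22/3$; if it always goes to $a_2$, the expected cost is $(3-\lambda)K + \lambda + 2 = 22/3$. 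In either case the expected ratio is $22/20 = 11/10$, giving the claim.

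The main obstacle is the careful reinterpretation of the SP linkage between $I$ and $I'$ within Yao's framework: in Yao's setup the two inputs are independent, and the SP constraint has to be invoked to couple $M$'s behavior across them. Once that coupling is justified the rest is arithmetic, and the parameter $\lambda = 2/3$ is precisely the value that equalizes the two cases so that no deterministic SP mechanism can beat $11/10$ on average regardless of which agent is awarded the scarce edge.
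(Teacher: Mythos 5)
Your proposal is correct and follows essentially the same route as the paper's own proof: Yao's minimax principle applied to the two instances of Theorem~\ref{thm:deterministic_LB} with $K=2$, the distribution $(\lambda,1-\lambda)=(2/3,1/3)$, the strategyproofness coupling that forces $(A,F)$ to go to the same agent in both instances, and the same case analysis yielding expected cost $22/3$ against expected optimum $20/3$. Nothing further to add.
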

\begin{proof}
Our approach is based on Yao's minimax principle \cite{Yao}. In our context, this principle states that the approximation ratio of the best universally truthful randomized mechanism is equal to the approximation ratio of the best deterministic truthful mechanism under a worst-case input distribution. Accordingly, we exhibit a probability distribution over input instances for which any deterministic truthful mechanism cannot attain an approximation guarantee better than $11/10$.
The two instances are taken from the proof of Theorem \ref{thm:deterministic_LB}, where we set $K=2$. Specifically, we consider the instance in Figure \ref{fig:LB_instance_2}, that we name $I$, and the very same instance where agent $a_1$ reports $F$; we call this instance $I'$. We consider a probability distribution over $I$ and $I'$ that returns $I$ with probability $\lambda$ and $I'$ with the remaining probability $1-\lambda$, where $\lambda=2/3$.  The expected value of the optimum will then be $(\lambda+1)K+4-\lambda=20/3$.
Let $M$ be a SP deterministic mechanism. From the arguments in the proof of the theorem above, we know that $M$ must assign the edge $(A,F)$ to the same agent in both instances $I$ and $I'$. If $M$ allocates $(A,F)$ to agent $a_1$ in both the instances then its expected social cost will be $(\lambda+1)K+4=22/3$ for an approximation ratio of $11/10$. If instead $M$ allocates $(A,F)$ to agent $a_2$ in both the instances then the expected social cost of the mechanism will be $(3-\lambda)K+\lambda+2=22/3$; the approximation ratio of $M$ would then be $11/10$.
\end{proof}

\setcounter{theorem}{8}
\begin{theorem}
The approximation ratio of RSD is $\Omega(n)$.
\end{theorem}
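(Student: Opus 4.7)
The plan is to exhibit a small, explicit TAP instance with $n$ agents in which the optimum has social cost $O(1)$ but $RSD$ has expected social cost $\Omega(n)$, which yields the claimed linear lower bound on the approximation ratio. The construction relies on a single bottleneck edge: whenever the first agent in the random ordering chosen by $RSD$ is the ``wrong'' one, she saturates the bottleneck and forces a distinguished agent onto a detour of cost $\Theta(n)$. Since the first pick is a wrong agent with probability $(n-1)/n$, this blocking event contributes $\Omega(n)$ to the expected cost of $RSD$, while the optimum avoids it entirely and pays only $O(1)$.

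I would place all $n$ agents $a_1, \dots, a_n$ at a common source $O$. Agent $a_1$ targets a private destination $D_1$, while $a_2, \dots, a_n$ all target a common destination $D$. The graph is the directed acyclic graph with hub nodes $X, Y$ and edges $(O, X), (X, D_1), (X, D), (O, Y), (Y, D_1), (Y, D)$; no reverse edges are present, so the symmetry assumption of Section~\ref{sec:model} does not apply. Set $\varepsilon := 1/n$. The edge $(O, X)$ is the bottleneck, with capacity $1$ and length $\varepsilon/2$; the edges $(X, D_1)$ and $(X, D)$ have capacity $1$ and length $\varepsilon/2$. The alternative edge $(O, Y)$ has capacity $n - 1$ and length $\varepsilon$; $(Y, D)$ has capacity $n - 1$ and length $\varepsilon$; and, crucially, $(Y, D_1)$ has capacity $1$ but length $n$. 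For any $a_i$ with $i \ge 2$, the path $(O, X, D)$ costs $\varepsilon$ and is strictly cheaper than $(O, Y, D)$, which costs $2\varepsilon$; for $a_1$, the path $(O, X, D_1)$ costs $\varepsilon$ while $(O, Y, D_1)$ costs $\varepsilon + n$. Thus every agent strictly prefers the $X$-route.

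The optimum assigns $(O, X, D_1)$ to $a_1$ and $(O, Y, D)$ to each $a_i$ with $i \ge 2$, for a social cost of $(2n - 1)\varepsilon \le 2$. For $RSD$, condition on the identity of the first agent in the random permutation. With probability $(n-1)/n$ this first agent is some $a_i$ with $i \ge 2$; since her strictly unique cheapest available path is $(O, X, D)$, $SD$ assigns it to her and saturates $(O, X)$. From that moment the unique feasible path for $a_1$ in the residual graph is $(O, Y, D_1)$, of cost $\varepsilon + n$, irrespective of where $a_1$ falls in the remaining positions. Consequently the expected social cost of $RSD$ is at least $\frac{n-1}{n}\cdot n = n - 1$, and the approximation ratio is at least $(n-1)/2 = \Omega(n)$.

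The main subtlety is ruling out alternative routes for $a_1$ once the bottleneck is taken: in an undirected or symmetric graph, the still-empty edge $(X, D_1)$ could in principle be reached by traversing edges in the reverse direction, rescuing $a_1$ at cost $\Theta(\varepsilon)$ and destroying the bound. My construction avoids this by making the graph acyclic with all edges pointing from $O$ towards the destinations, so no backward path exists. Strict length gaps between the $X$- and $Y$-paths additionally remove any dependence on tie-breaking in $SD$. Feasibility is immediate: in the blocking scenario, the demand on $(O, Y)$ and $(Y, D)$ is exactly $n - 1$, matching their capacity, while $(Y, D_1)$ is used by $a_1$ alone. The argument makes no use of the DoCP assumption, which is not invoked in the statement of Theorem~\ref{theorem:RSD_lowerbound}.
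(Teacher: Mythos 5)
Your proof is correct, and it takes a genuinely different and more elementary route than the paper's. The paper reuses the layered instance of Figure~\ref{fig:sd_tight_instance}, places $1+2\cdot 3^{i-1}$ agents at level $i$, and shows that a ``chain of levels'' ordering occurs with probability $(2/3)^{k-1}$ and then forces cost $2^{k-1}$; the rare-but-expensive event yields an expected cost of $(4/3)^{k-1}$ against an $O(1)$ optimum, which with $n=\Theta(3^{k})$ agents gives, as actually computed there, only $n^{\log_3(4/3)}\approx n^{0.26}$. Your construction instead concentrates everything in one bottleneck edge $(O,X)$ and makes the bad event happen with \emph{constant} probability $(n-1)/n$ at cost $\Theta(n)$: the optimum is at most $2$, the expected cost of RSD is at least $n-1$, and the $\Omega(n)$ ratio follows in two lines. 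The accounting is right, including under the reaction-based cost definition of Section~\ref{sec:model}: once a destination-$D$ agent saturates $(O,X)$, the only residual $O$--$D_1$ path is $(O,Y,D_1)$, the strict cost gaps eliminate tie-breaking issues, and capacities are respected in every branch. What each approach buys: yours is shorter and actually delivers the stated bound cleanly, while the paper's stays within its symmetric-edge convention. The one caveat to record is that your instance is a DAG that necessarily violates the DoCP assumption (the blocking path has capacity $1$ and the alternative paths $\Gamma_i^j$ do not exist without reverse edges), and, as you yourself observe, adding the symmetric reverse edges would rescue $a_1$ via $O\to Y\to D\to X\to D_1$ at cost $3\varepsilon$ and destroy the bound. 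So you prove Theorem~\ref{theorem:RSD_lowerbound} exactly as stated (it is unconditional), but your instance does not witness tightness of the DoCP-conditional upper bound of Theorem~\ref{theorem:RSD_upperbound} on the class of instances where that upper bound applies; if that tightness is the intended reading, a DoCP-compatible construction along the paper's lines is still needed.
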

\begin{proof}
The proof uses the same construction as the instance of Figure \ref{fig:sd_tight_instance}, with $k<n$ nodes.
One agent is initially located at node $v_1$, whereas $1+2\cdot3^{i-1}$ agents are initially located at node $v_i$, for $i=2,\ldots, k-1$ .
With a little abuse of notation, let $|v_i|$ denote the number of agents initially located at node $v_i$, and let $n_i = \sum_{\ell=0}^i |v_\ell|$.
Edges $(v_1,D)$ and $(v_1,v_2)$ have capacity $1$, whereas edges $(v_i,D)$ and $(v_i,v_{i+1})$ have capacity $1+2\cdot3^{i-1}$ for $i>1$.
Let $a_1 \prec\ldots \prec a_n$ be an ordering over the agents.
We will be interested in orderings that possess the \emph{chain of levels} property, namely for all $i=1,2,\ldots, k-1$ at least one agent located at node $i$ appears after all agents of levels $0,1,\ldots i-1$.
The property of a chain of levels ordering with respect to the instance of Figure \ref{fig:sd_tight_instance} is that it forces at least one agent located at node $v_i$, for all $i = 1,\ldots, k-1$ to use the path $P = (v_i,v_{i+1},D)$, at a cost of $2^{i-1}$ for the agents, and an overall social cost of $\sum_{i=1}^{k-1}2^{i-1}=2^k-1 > 2^{k-1}$.

We argue that the probability that a chain of levels ordering is chosen by RSD is $\Pi_{i=1}^{k-1} \left(1-\frac{n_{i-1}}{n_i}\right)$.
Indeed, we can look at the process of randomly generating an ordering as follows. First an ordering for the agents located at each node is uniformly generated at random.
Then orderings of agents of consecutive nodes are merged together in lexicographic order.
In particular, we start merging the orderings of nodes $v_1$ and $v_2$.
There are $\binom{|v_2|+|v_1|}{v_1} = \binom{n_2}{n_1}$ such orderings, whereas there are $\binom{n_2-1}{n_1}$ orderings where one agent located at node $v_2$ follows all the agents located at node $v_1$. The partial ordering obtained so far is randomly merged with the ordering of agents at node $v_3$ and the procedure continues until the partial ordering is complete. When merging agents at node $v_i$ with the current partial ordering, we note that there are $\binom{n_i}{n_{i-1}}$ possible orderings, and $\binom{n_i-1}{n_{i-1}}$ where for all $\ell = 1,\ldots, i$, one agent located at node $v_\ell$ follows all the agents located at node $v_{\ell-1}$ in the ordering (i.e., fix one agent from node $v_\ell$ in the last position and compute all possible orderings of the other agents).
Hence, the probability of one agent at node $v_\ell$ appearing after all agents at node $v_{\ell-1}$ is $\binom{n_i-1}{n_{i-1}}/\binom{n_i}{n_{i-1}} = \left(1-\frac{n_{i-1}}{n_i}\right)$.
Since the random orderings generated at each stage are independent, the probability that for all $i = 1, 2, \ldots, k-1$ at least one agent at node $v_i$ appears after all agents at node $v_{i-1}$ in a random ordering is $\Pi_{i=1}^{k-1}\left(1-\frac{n_{i-1}}{n_i}\right)$.
Hence, the probability that a chain of levels ordering is chosen by RSD for the instance of Figure \ref{fig:sd_tight_instance} is $(2/3)^{k-1}$.

Finally, the expected cost of RSD is at least $(4/3)^{k-1} = n^{\log_3(4/3)}\approx n^{0.262}$. Since the optimal allocation costs $1+\epsilon \cdot n$, the approximation ratio is $\Omega(n)$ for $\epsilon$ close to $0$.
\end{proof}
\end{document}